\documentclass[11pt,a4paper]{article}

\usepackage[utf8]{inputenc}
\usepackage[T1]{fontenc}
\usepackage{amsmath,amssymb,amsthm,amsfonts}
\usepackage{mathtools}
\usepackage{array}
\usepackage{enumitem}
\usepackage[hidelinks]{hyperref}
\usepackage[margin=1in]{geometry}
\usepackage{booktabs}
\usepackage{authblk}
\theoremstyle{plain}
\usepackage{color}
\newtheorem{theorem}{Theorem}[section]
\newtheorem{proposition}[theorem]{Proposition}
\newtheorem{lemma}[theorem]{Lemma}
\newtheorem{corollary}[theorem]{Corollary}

\theoremstyle{definition}
\newtheorem{definition}[theorem]{Definition}

\theoremstyle{remark}
\newtheorem{remark}[theorem]{Remark}
\newtheorem{example}[theorem]{Example}

\newcommand{\F}{\mathbb{F}}
\newcommand{\N}{\mathbb{N}}

\newcommand{\cC}{\mathcal{C}}
\newcommand{\cB}{\mathcal{B}}
\newcommand{\cM}{\mathcal{M}}
\newcommand{\cX}{\mathcal{X}}

\newcommand{\bn}{\mathbf{n}}
\newcommand{\bm}{\mathbf{m}}
\newcommand{\br}{\mathbf{r}}
\newcommand{\bV}{\mathbf{V}}
\newcommand{\bW}{\mathbf{W}}
\newcommand{\rk}{\operatorname{rk}}
\newcommand{\trk}{\operatorname{trk}}
\newcommand{\btrk}{\operatorname{btrk}}
\newcommand{\wt}{\operatorname{wt}}
\newcommand{\SR}{\mathrm{SR}}
\newcommand{\dSR}{d_{\mathrm{SR}}}
\newcommand{\GL}{\operatorname{GL}}
\newcommand{\diag}{\operatorname{diag}}
\newcommand{\supp}{\operatorname{supp}}
\newcommand{\ssone}{\mathrm{ss}_1}
\DeclareMathOperator{\Image}{Im}
\DeclareMathOperator{\Span}{span}
\newcommand{\Gri}{G_q}
\newcommand{\emb}{\varepsilon}

\title{Block Tensor Rank of Sum-Rank Metric Codes}

\author[1]{Huimin Lao}
\author[2]{Huy Pham}
\author[2]{Hoang Ta}
\author[3]{Van Khu Vu}

\affil[1]{\small Nanyang Technological University, Singapore}
\affil[2]{\small Hanoi University of Science and Technology, Vietnam}
\affil[3]{\small VinUniversity, Vietnam}
\date{\today}

\begin{document}
\maketitle
\begin{abstract}
    Sum-rank codes provide a generalized framework for Hamming and rank-metric codes, with codewords represented as tuples of matrices and weight given by the sum of the block ranks. In this paper, we introduce and study a block-tensor-rank invariant for sum-rank metric codes. To each code, we associate its \emph{block tensor rank}: the smallest number of block-simple tensors, namely rank-one matrices supported inside single blocks, whose linear span contains the code. In general, determining the block tensor rank of a sum-rank code is challenging. Our main structural result shows that the block tensor rank decomposes additively across the blocks of the code, thereby reducing its computation to a tensor-rank problem on each block projection. Consequently, we derive two complementary lower bounds on the block tensor rank, referred to as the \emph{projection-wise bound} and the \emph{coordinate-code bound}. Moreover, by combining the coordinate-code bound with the classical Singleton and Griesmer bounds for codes in the Hamming metric, we obtain explicit lower bounds, called the \emph{Singleton coordinate-code bound} and the \emph{Griesmer coordinate-code bound}, respectively. We further construct families of sum-rank codes whose block tensor ranks attain the Singleton or Griesmer coordinate-code bounds. These constructions are based on Hamming-metric codes achieving the corresponding classical bounds. Finally, we show that, in certain cases, the block tensor ranks of two known families of sum-rank codes in the literature do not attain the Singleton coordinate-code bound.
\end{abstract}

\paragraph{Keywords.}
Sum-rank metric codes; tensor rank; rank-metric codes; MDS codes.
\paragraph{Mathematics Subject Classification.}
94B05; 94B65; 15A69.

\section{Introduction}
\label{sec:introduction}

The sum-rank metric was introduced in~\cite{MartinezPenas2018} as a common framework for Hamming-metric and rank-metric codes. In this metric, a codeword is a tuple of matrices, partitioned into a fixed number of blocks, and its weight is the sum of the ranks of those blocks. Taking all blocks to have size \(1\times1\) recovers the Hamming metric, while taking a single block of arbitrary size recovers the rank metric. The sum-rank metric has since become an important object in coding theory, with applications to multishot network coding, space-time coding, distributed storage, and related network-coding models; see, for instance,~\cite{MartinezPenas2018,MartinezPenasKschischang2019,ByrneGluesingLuersse,SilvaKschischangKoetter2008,EtzionSilberstein2009,Ravagnani2016}.

The Singleton bound for the sum-rank metric generalizes both the classical Singleton bound for Hamming-metric codes and the Singleton bound for rank-metric codes. Codes attaining the sum-rank Singleton bound are called \emph{maximum sum-rank distance} (MSRD) codes. In the sum-rank setting, they play a role analogous to that of maximum-distance separable codes in the Hamming metric and maximum-rank-distance codes in the rank metric. Explicit MSRD families, including linearized Reed--Solomon codes, have been developed and studied extensively in~\cite{Delsarte1978,Gabidulin1985,MartinezPenas2018,MartinezPenasKschischang2019,Ravagnani2016}.

A parallel direction studies rank-metric codes through tensor decompositions. From this viewpoint, a rank-metric code is represented by a generator \(3\)-tensor, and the tensor rank of the code is the minimum number of rank-one matrices whose linear span contains it~\cite{BNRS19}. This tensor-rank invariant satisfies a Singleton-type lower bound: if \(\cC\) is a rank-metric code with tensor rank \(\trk(\cC)\) and minimum rank distance \(d_{\rm rk}(\cC)\), then
\[
\trk(\cC)\ge \dim(\cC)+d_{\rm rk}(\cC)-1.
\]
Codes meeting this bound are called \emph{minimum tensor rank} (MTR) codes. The tensor viewpoint connects rank-metric coding theory with classical questions on tensor decompositions, algebraic complexity, and linear block codes~\cite{Kruskal1977,BNRS19,Landsberg2012,BiniLotti1980}.

Since sum-rank codes generalize rank-metric codes, it is natural to ask whether the tensor-rank invariant of rank-metric codes admits a sum-rank analogue. The sum-rank metric is defined by summing the ranks of the individual blocks, so the natural rank-one building blocks should also respect the block decomposition. Thus, we restrict to rank-one matrices supported inside a single block, and call them block-simple tensors. We define the \emph{block tensor rank} of a sum-rank code as the smallest number of block-simple tensors whose \(\F_q\)-linear span contains the code. This invariant recovers the usual tensor rank in the rank-metric case and the minimum coordinate-support size in the Hamming case. Thus, the intermediate sum-rank regime is the setting in which this block-adapted tensor-rank invariant differs from both classical extremes. 

This invariant is also distinct from support-theoretic invariants in the sum-rank metric. Sum-rank supports, support spaces, generalized sum-rank weights, and optimal anticodes measure ambient support structures of codes~\cite{martinezpenas2019theory,camps2022optimal}. By contrast, block tensor rank measures the minimum number of rank-one directions inside the prescribed blocks needed to linearly span the code. In the Hamming specialization it recovers effective support size, whereas for genuine matrix blocks it refines support by resolving each block into rank-one directions.

Tensor methods have also appeared in the study of code equivalence problems for sum-rank codes. In particular, sum-rank code equivalence has been encoded in tensor-isomorphism terms, relating it to monomial tensor isomorphism and rank-metric code equivalence~\cite{d2024monomial}. That line of work uses tensor representations as a tool for complexity reductions and equivalence testing. Our focus is different: we do not study the equivalence of two given sum-rank codes. Instead, we introduce a block-simple spanning invariant of a single code and develop lower bounds, extremal notions, and constructions for this invariant. Thus, the present work should be viewed as a block-tensor-rank analogue of the tensor-rank theory of rank-metric codes, adapted to the block structure of the sum-rank metric.

\par 

\textbf{Contributions.} The paper makes the following contributions.

\begin{enumerate}[label=\textup{(\arabic*)}, leftmargin=*, itemsep=4pt]

\item \emph{A block-tensor-rank invariant for sum-rank codes.}
We introduce block-simple tensors and the associated block tensor rank. This gives a numerical invariant of a single sum-rank code, adapted to the block structure of the sum-rank metric. It specializes to the usual tensor rank for rank-metric codes and to minimum coordinate-support size for Hamming-metric codes.

\item \emph{A blockwise decomposition theorem.}
We prove that the block tensor rank of a sum-rank code decomposes as the sum of the tensor ranks of its block projections. Thus, the computation of the block tensor rank separates into rank-metric tensor-rank problems on the individual blocks.


\item \emph{Lower bounds on block tensor rank}.
We prove two complementary lower bounds for block tensor rank of sum-rank codes. The first one is the projection-wise bound following from the blockwise decomposition. 
The second one is the coordinate-code bound, which says that for a sum-rank code \(\mathcal C\) of dimension \(k\) and minimum sum-rank
distance \(d\), $\operatorname{btrk}(\mathcal C)\ge N_q(k,d),$ where \(N_q(k,d)\) denotes the minimum length of a linear code of dimension \(k\) and minimum Hamming distance \(d\). The codes attaining this bound are called \emph{block-tensor-rank-extremal codes}.
By classical Singleton bound and Griesmer bound, we further obtain \begin{align*}
    \operatorname{btrk}(\mathcal{C}) & \geq k+d-1, &  (\text{Singleton coordinate-code bound}), \\
    \operatorname{btrk}(\mathcal{C}) &\geq G_{q}(k,d):=\sum_{j=0}^{k-1}\left\lceil\frac{d}{q^j}\right\rceil, & (\text{Griesmer coordinate-code bound}).
\end{align*}
A code attaining the bound \(N_q(k,d)\) is called \emph{block-tensor-rank-extremal}. A code attaining the Singleton-level bound
\(k+d-1\) is called a \emph{block tensor rank minimum} code, or a \emph{BTR} code. Since \(N_q(k,d)\ge k+d-1\), BTR is a stronger Singleton-level condition and can occur only when \(N_q(k,d)=k+d-1\).

\item \emph{Constructions of block-tensor-rank-extremal codes}.
We present a block-diagonal lifting method and identify three explicit parameter regimes in which it yields BTR sum-rank codes.
The sum-rank codewords are obtained by partitioning the codewords of MDS codes into several small blocks and then transforming each of them to matrices by block-diagonal maps. Assuming the existence of an MDS \([R,k,d]_q\) backbone code with \(R=k+d-1\),
our method yields BTR sum-rank codes in $\cM(\bn,\bm)$ (see (\ref{eq: mat_space})) in several parameter regimes, summarized in Table~\ref{tab:btr-regimes}.

\begin{table}[ht]
\centering
\renewcommand{\arraystretch}{1.25}
\begin{tabular}{|c|c|c|}
\hline
\textbf{Regime} & \textbf{Hypothesis} & \textbf{Result} \\
\hline
Fat block
&
$\min(n_{i_0},m_{i_0})\ge R$, \text{for an} $i_{0} \in [t]$
&
Proposition~\ref{prop:fat-block}
\\
\hline
MDS backbone, uniform
&
$n_i=n\ge d,\; m_i=m\ \text{for all }i,\; n+m\ge r+d$
&
Corollary~\ref{cor:uniform-ngeqd}
\\
\hline
Pooling regime, small $k$
&
$n_i=n,\; m_i=m, \; n<d\le t n,\; r\le\min(n,m)$
&
Corollary~\ref{cor:uniform-pooling}
\\
\hline
\end{tabular}
\caption[Explicit BTR regimes from block-diagonal lifting]{Explicit BTR regimes obtained from block-diagonal lifting with an MDS \([R,k,d]_q\) backbone code, where \(R=k+d-1\). In the last two regimes, the number of blocks $t=R/r$.}
\label{tab:btr-regimes}
\end{table}

In addition, we provide an explicit construction to obtain BTR sum-rank codes. Furthermore, using the same lifting method with linear codes attaining the classical Griesmer bound, such as simplex codes and MacDonald codes, we obtain block-tensor-rank-extremal codes whose block tensor rank attains the Griesmer coordinate-code bound.

\item \emph{Comparison with MSRD codes and BTR codes.} We compare block-tensor-rank optimality with MSRD optimality. We show that the BTR codes obtained in the uniform MDS-backbone lifting regime considered in this paper cannot simultaneously attain the uniform MSRD dimension. We also derive a projection-gap criterion that quantifies how the tensor ranks of the block projections force a code away from the BTR threshold. Applying this criterion to existing sum-rank constructions gives both close-to-BTR and large-gap examples.

\end{enumerate}

\textbf{Organization.}
Section~\ref{sec:preliminaries} fixes the notation for sum-rank metric codes,
recalls the relevant Hamming-metric and tensor-rank preliminaries, and
introduces block-simple tensors and block tensor rank. Section~\ref{sec:blockwise-projection}
proves the blockwise decomposition identity and derives the projection-wise
lower bound, together with elementary upper bounds and invariance properties.
Section~\ref{sec:coordinate-bounds} develops the coordinate-code bound, derives
the Singleton and Griesmer coordinate-code bounds, and introduces the associated
extremal notions. Section~\ref{sec:constructions} presents the block-diagonal
lifting method and constructs BTR and block-tensor-rank-extremal families.
Section~\ref{sec:msrd-comparison} compares BTR and MSRD optimality, applies the
framework to existing constructions, and discusses encoding complexity.
Section~\ref{sec:conclusion} concludes with future directions.

\section{Preliminaries}
\label{sec:preliminaries}

\subsection{Sum-rank metric codes and notation}

Throughout the paper, $q$ denotes a prime power and $t\ge 1$ a positive integer (the number of blocks). We fix two tuples of positive integers
\[
\bn=(n_1,\dots,n_t),\qquad \bm=(m_1,\dots,m_t),
\]
recording the row and column sizes of the blocks. We set $N:=\sum_{i=1}^{t}n_i$ and $M:=\sum_{i=1}^{t}m_i$; for $\ell\ge 1$ we write $[\ell]:=\{1,\dots,\ell\}$. All vector spaces, dimensions and linear maps are taken over $\F_q$ unless stated otherwise. The ambient space of the sum-rank metric is the direct sum
\begin{align}
\label{eq: mat_space}
\cM(\bn,\bm) \coloneqq \bigoplus_{i=1}^{t}\F_q^{n_i\times m_i},
\end{align}
which is an $\F_q$-vector space of dimension $\sum_{i=1}^{t}n_im_i$. We write an element as $X=(X_1,\dots,X_t)$, with $X_i\in\F_q^{n_i\times m_i}$ the $i$-th block.

\begin{definition}
\label{def:sr-weight}
The \emph{sum-rank weight} of $X=(X_1,\dots,X_t)\in\cM(\bn,\bm)$ is $\wt_{\SR}(X)\coloneqq \sum_{i=1}^{t}\rk(X_i)$
and the \emph{sum-rank distance} is $\dSR(X,Y) \coloneqq \wt_{\SR}(X-Y)$.
\end{definition}

\begin{example}
\label{ex:sr-weight}
Let $q=2$, $t=2$, $\bn=(2,2)$, $\bm=(3,2)$, and consider
\[
X_1=\begin{pmatrix}1&0&1\\0&1&1\end{pmatrix},\qquad
X_2=\begin{pmatrix}1&1\\1&1\end{pmatrix}.
\]
The rows of $X_1$ are linearly independent over $\F_2$, so $\rk(X_1)=2$, while $X_2$ has all entries equal and so $\rk(X_2)=1$. Therefore $\wt_{\SR}((X_1,X_2))=3$.
\end{example}

\begin{definition}
\label{def:sr-code}
A \emph{sum-rank metric code} is an $\F_q$-linear subspace $\cC\subseteq\cM(\bn,\bm)$. If $\cC\ne\{0\}$, its \emph{minimum sum-rank distance} is
\[
\dSR(\cC)\;:=\;\min\{\wt_{\SR}(X):X\in\cC,\ X\ne 0\}.
\]
We say that $\cC$ is an $[\bn\times\bm,k,d]_q$ sum-rank code if $\dim_{\F_q}(\cC)=k$ and $\dSR(\cC)=d$.
\end{definition}

\begin{remark}
\label{rem:extremes}
If $n_i=m_i=1$ for all $i$, then $\cM(\bn,\bm)=\F_q^t$ and $\wt_{\SR}$ is the Hamming weight. If $t=1$, then $\cM(\bn,\bm)=\F_q^{n_1\times m_1}$ and $\wt_{\SR}$ is the rank weight.
\end{remark}

We record the sum-rank Singleton bound for reference; the proof in the general non-uniform setting is in~\cite{byrne2021fundamental}, and the uniform specialization is due to~\cite{MartinezPenas2018}.

\begin{theorem}[Sum-rank Singleton bound]
\label{thm:singleton}
Let $\cC$ be a nonzero $[\bn\times\bm,k,d]_q$ sum-rank code. Reorder the blocks so that $(\min\{n_i,m_i\})_{i=1}^{t}$ is non-increasing, and let $j$ and $\delta$ be the unique integers satisfying
\[
0\le\delta<\min\{n_j,m_j\},\qquad d-1=\sum_{i=1}^{j-1}\min\{n_i,m_i\}+\delta.
\]
Then
\[
k\;\le\;\sum_{i=j}^{t}\max\{n_i,m_i\}\min\{n_i,m_i\}\;-\;\delta\max\{n_j,m_j\}.
\]
In the uniform case $n_i=n\le m_i=m$ for all $i$, the bound reduces to $k\le m(tn-d+1)$. A code attaining the bound is called an \emph{MSRD code}.
\end{theorem}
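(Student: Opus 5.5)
We prove the bound by a puncturing argument: delete $d-1$ units of rank from the code and count the dimension of what remains. First we use the invariance of the sum-rank weight under transposing an individual block. This lets us assume $n_i\le m_i$ for every $i$, so $\min\{n_i,m_i\}=n_i$ and $\max\{n_i,m_i\}=m_i$. We then order the blocks so that $n_1\ge n_2\ge\dots\ge n_t$, and take $j,\delta$ as in the statement, so that $d-1=\sum_{i<j}n_i+\delta$ with $0\le\delta<n_j$.

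Next we define a linear \emph{puncturing} map $\pi:\cC\to\bigoplus_{i>j}\F_q^{n_i\times m_i}\oplus\F_q^{(n_j-\delta)\times m_j}$. It forgets blocks $1,\dots,j-1$ entirely, deletes the first $\delta$ rows of block $j$, and keeps all remaining blocks. We claim $\pi$ is injective. Suppose $X\in\cC$ lies in the kernel. Then $X_i=0$ for $i>j$, and $X_j$ is supported on its first $\delta$ rows, so $\rk(X_j)\le\delta$. The blocks $i<j$ contribute at most $\rk(X_i)\le n_i$ each. Hence
\[
\wt_{\SR}(X)\le \sum_{i<j}n_i+\delta=d-1<d,
\]
which forces $X=0$ by the definition of $\dSR(\cC)$.

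Injectivity of $\pi$ bounds $k$ by the dimension of the codomain:
\[
k\le\sum_{i>j}n_im_i+(n_j-\delta)m_j=\sum_{i=j}^{t}n_im_i-\delta m_j.
\]
This is exactly the claimed bound once we translate back via $\min$ and $\max$. In the uniform case $n_i=n\le m$ it reads $k\le m\bigl((t-j+1)n-\delta\bigr)$. Since $(j-1)n+\delta=d-1$, this gives $(t-j+1)n-\delta=tn-d+1$, so the bound becomes $k\le m(tn-d+1)$.

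The main obstacle is choosing which blocks to discard so that the resulting bound is as stated. Discarding the blocks with the largest $\min\{n_i,m_i\}$ first is what makes $j$ well defined and keeps the deleted rank budget equal to exactly $d-1$. The partial deletion in block $j$ must remove rows along the $\min$ side, so that each deleted row costs $\max\{n_j,m_j\}$ dimensions. Beyond this choice, the argument only needs the fact that a matrix supported on $\delta$ rows has rank at most $\delta$.
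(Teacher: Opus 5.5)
The paper does not prove this theorem itself but quotes it from \cite{byrne2021fundamental} (uniform case \cite{MartinezPenas2018}). Your argument is correct and is essentially the standard puncturing proof of that reference: you transpose blocks so that $n_i\le m_i$, discard blocks $1,\dots,j-1$ and the first $\delta$ rows of block $j$, and observe that any codeword in the kernel has sum-rank weight at most $d-1$.

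One adjustment to your closing remark: the prescribed ordering is not needed for $j,\delta$ to be well defined (that only needs $d\le\sum_i\min\{n_i,m_i\}$, true for any nonzero code), nor for injectivity, which holds for every ordering of the blocks. The ordering affects only which valid bound you obtain.
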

\subsection{Classical Hamming-metric bounds and tensor rank}

\textbf{Notation conventions.}
For integers \(1\le a\le b\), we write $[a,b]:=\{a,a+1,\ldots,b\}.$ For a vector \(c\in\mathbb F_q^R\), we denote by
\[
\mathrm{wt}_H(c):=\#\{r\in[R]:c_r\ne 0\}
\]
its Hamming weight. If \(D\subseteq \mathbb F_q^R\) is a nonzero linear block
code, then
\[
d_H(D):=\min\{\mathrm{wt}_H(c):c\in D,\ c\ne 0\} 
\]
denotes its minimum Hamming distance. Throughout the paper, \(N_q(k,d)\)
denotes the minimum length \(N\) of a linear \([N,k,\ge d]_q\) block code over
\(\mathbb F_q\).

We recall the two classical Hamming-metric lower bounds on \(N_q(k,d)\) that
will be used later. The Singleton bound gives
\[
N_q(k,d)\ge k+d-1.
\]
For positive integers \(k\) and \(d\), define the Griesmer quantity
\[
\Gri(k,d):=\sum_{j=0}^{k-1}\left\lceil\frac{d}{q^j}\right\rceil.
\]
The Griesmer bound gives
\[
N_q(k,d)\ge \Gri(k,d).
\]

\textbf{Tensor-rank notation.}
We briefly recall the notation for $3$-tensors used in~\cite{BNRS19}. Let $\F$ be a field and $k,n,m\in\N$. A $3$-tensor $X\in\F^{k\times n\times m}$ admits a decomposition as a sum of \emph{simple tensors}
\[
X=\sum_{r=1}^{R}u_r\otimes v_r\otimes w_r,\qquad u_r\in\F^k,\ v_r\in\F^n,\ w_r\in\F^m,
\]
and the minimum such $R$ is the \emph{tensor rank} $\trk(X)$. For a matrix $A\in\F^{s\times k}$, the operation
\[
m_1(A,X)\;:=\;\sum_{r}(Au_r)\otimes v_r\otimes w_r
\]
is independent of the chosen decomposition and satisfies $m_1(A,m_1(B,X))=m_1(AB,X)$. Writing $\{e_j\}_{j=1}^{k}$ for the standard basis of $\F^k$, the \emph{first slice space} of $X$ is
\[
\ssone(X)\;:=\;\bigl\langle m_1(e_j,X):j\in[k]\bigr\rangle\subseteq\F^{n\times m}.
\]
We refer to~\cite[\S 3]{BNRS19} for a detailed treatment. When $\cC\subseteq\F^{n\times m}$ is a rank-metric code, its \emph{tensor rank} $\trk(\cC)$ is by definition the minimum number of rank-one matrices in $\F^{n\times m}$ whose $\F$-linear span contains $\cC$; equivalently~\cite[Definition~4.3]{BNRS19}, $\trk(\cC)$ is the minimum tensor rank of a generator tensor for $\cC$.

\subsection{Block-simple tensors}

The first new ingredient of the framework is the class of elementary tensors
that respect the block partition. For each $i\in[t]$, write
\[
\emb_i:\F_q^{n_i\times m_i}\hookrightarrow\cM(\bn,\bm)
\]
for the canonical inclusion into the $i$-th coordinate (with zeros on all other blocks).

\begin{definition}
\label{def:blocksimple}
An element \(B\in\cM(\bn,\bm)\) is \emph{block-simple} if there exist \(i\in[t]\), nonzero vectors \(v\in\F_q^{n_i}\) and \(w\in\F_q^{m_i}\)
such that
\[
B=\emb_i(vw^\top).
\]
In this case we say that \(B\) is \emph{supported on block~\(i\)}. We write \(\cB(\bn,\bm)\) for the set of all block-simple tensors.
\end{definition}

Under the block-diagonal embedding $\cM(\bn,\bm)\hookrightarrow\F_q^{N\times M}$, a block-simple tensor $\emb_i(vw^\top)$ becomes a rank-one matrix $\widetilde{v}\widetilde{w}^\top$ where $\widetilde{v}\in\F_q^N$ is $v$ padded with zeros outside the $i$-th row-block, and similarly for $\widetilde{w}$.

\begin{lemma}
\label{lem:blocksimple-span}
Every element of $\cM(\bn,\bm)$ is an $\F_q$-linear combination of block-simple tensors.
\end{lemma}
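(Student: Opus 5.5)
The plan is to decompose an arbitrary element blockwise and then expand each block in the standard basis of elementary matrices. Let $X=(X_1,\dots,X_t)\in\cM(\bn,\bm)$. Since $\cM(\bn,\bm)$ is the direct sum of the blocks and the canonical inclusions $\emb_i$ are $\F_q$-linear, we have
\[
X=\sum_{i=1}^{t}\emb_i(X_i).
\]
It therefore suffices to show that each $\emb_i(X_i)$ is an $\F_q$-linear combination of block-simple tensors supported on block~$i$.

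Fix $i\in[t]$. Let $e^{(n_i)}_a$ denote the standard basis vectors of $\F_q^{n_i}$ and $e^{(m_i)}_b$ those of $\F_q^{m_i}$. Writing $X_i=(x_{ab})$, we expand
\[
X_i=\sum_{a=1}^{n_i}\sum_{b=1}^{m_i}x_{ab}\,e^{(n_i)}_a\bigl(e^{(m_i)}_b\bigr)^{\top}.
\]
Applying the linear map $\emb_i$ gives
\[
\emb_i(X_i)=\sum_{a,b}x_{ab}\,\emb_i\bigl(e^{(n_i)}_a(e^{(m_i)}_b)^{\top}\bigr).
\]
Each summand $\emb_i\bigl(e^{(n_i)}_a(e^{(m_i)}_b)^{\top}\bigr)$ is block-simple in the sense of Definition~\ref{def:blocksimple}, because the vectors $e^{(n_i)}_a$ and $e^{(m_i)}_b$ are nonzero.

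Summing over $i$ then writes $X$ as a linear combination of at most $\sum_i n_im_i$ block-simple tensors, which proves the lemma. It also shows that $\cB(\bn,\bm)$ contains a basis of $\cM(\bn,\bm)$.

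There is no real obstacle here. The only point needing care is that Definition~\ref{def:blocksimple} requires the vectors $v$ and $w$ to be nonzero. This is why the expansion uses the elementary matrices $e_ae_b^\top$ rather than an arbitrary rank decomposition that might include zero factors. Coefficients $x_{ab}=0$ are simply dropped, and the zero element is the empty combination.
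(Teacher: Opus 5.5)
Your proof is correct and follows the same blockwise route as the paper. The only difference is that the paper expands each $X_i$ via a rank decomposition into $\rk(X_i)\le\min(n_i,m_i)$ rank-one matrices rather than into matrix units. That version writes $X$ as a sum of $\wt_{\SR}(X)$ block-simple tensors, whereas yours exhibits a fixed block-simple basis of $\cM(\bn,\bm)$, the fact reused in Proposition~\ref{prop:elementary-upper-btrk}. Your zero-factor concern does not arise for a minimal rank decomposition, since each of its $\rk(X_i)$ terms is automatically nonzero.
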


\begin{proof}
Let $X=(X_1,\dots,X_t)\in\cM(\bn,\bm)$, so that $X=\sum_{i=1}^{t}\emb_i(X_i)$. For each $i$, the matrix $X_i\in\F_q^{n_i\times m_i}$ is a sum of $\rk(X_i)\le\min(n_i,m_i)$ rank-one matrices over $\F_q$, say $X_i=\sum_{s}v_{i,s}w_{i,s}^\top$. Applying $\emb_i$,
\[
\emb_i(X_i)=\sum_{s}\emb_i(v_{i,s}w_{i,s}^\top),
\]
and each summand is block-simple. Summing over $i$ writes $X$ as a linear combination of block-simple tensors.
\end{proof}

\begin{example}
\label{ex:blocksimple}
Let $q=2$, $t=2$, $\bn=\bm=(2,2)$. Take $v=(1,0)^\top$, $w=(1,1)^\top$, so that
\[
\emb_1(vw^\top)=\biggl(\begin{pmatrix}1&1\\0&0\end{pmatrix},\,\begin{pmatrix}0&0\\0&0\end{pmatrix}\biggr)
\]
is block-simple of sum-rank weight $1$. By contrast,
\[
X=\biggl(\begin{pmatrix}1&0\\0&0\end{pmatrix},\,\begin{pmatrix}1&0\\0&0\end{pmatrix}\biggr)
\]
has sum-rank weight $2$ but is \emph{not} block-simple, even though its block-diagonal embedding into $\F_2^{4\times 4}$ has matrix rank $2$. We can however write $X=\emb_1(e_1e_1^\top)+\emb_2(e_1e_1^\top)$, expressing it as a sum of two block-simple tensors.
\end{example}

\subsection{Generator tuples}

In order to talk about tensor decompositions of a code, we attach to each code a tuple of generator tensors, one for each block.

\begin{definition}
\label{def:gentuple}
Let $\cC\subseteq\cM(\bn,\bm)$ be a sum-rank code of dimension $k$. A \emph{generator tuple} for $\cC$ is a tuple
\[
\cX=(X^{(1)},\dots,X^{(t)}),\qquad X^{(i)}\in\F_q^{k\times n_i\times m_i},
\]
such that
\[
\cC=\Bigl\{\bigl(m_1(a,X^{(1)}),\dots,m_1(a,X^{(t)})\bigr):a\in\F_q^k\Bigr\}.
\]
Equivalently, the map
\[
E_\cX:\F_q^k\to\cM(\bn,\bm),\qquad a\longmapsto\bigl(m_1(a,X^{(i)})\bigr)_{i=1}^{t},
\]
is an $\F_q$-linear monomorphism with image $\cC$.
\end{definition}

\begin{lemma}
\label{lem:gentuple-existence}
Every sum-rank code $\cC$ of dimension $k$ admits a generator tuple. Moreover, any two generator tuples for $\cC$ are related by a basis change: if $\cX$ and $\cX'$ both generate $\cC$, then there exists $P\in\GL(k,q)$ such that $X'^{(i)}=m_1(P,X^{(i)})$ for every $i\in[t]$.
\end{lemma}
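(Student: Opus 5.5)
Existence reduces to choosing a basis of $\cC$ and reading off coordinates; uniqueness up to $\GL(k,q)$ follows from injectivity of the two encoding maps. Fix a basis $C_1,\dots,C_k$ of $\cC$ with $C_j=(C_{j,1},\dots,C_{j,t})$ and $C_{j,i}\in\F_q^{n_i\times m_i}$. For each block $i$, I will define $X^{(i)}\in\F_q^{k\times n_i\times m_i}$ by stacking slices:
\[
X^{(i)}:=\sum_{j=1}^{k} e_j\otimes C_{j,i},
\]
where $C_{j,i}$ is read as an element of $\F_q^{n_i}\otimes\F_q^{m_i}$. A decomposition of each $C_{j,i}$ into rank-one matrices, as in the proof of Lemma~\ref{lem:blocksimple-span}, turns this into a sum of simple tensors. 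Using the definition of $m_1$, for $a\in\F_q^k$, viewed as a $1\times k$ matrix $a^\top$, we get
\[
m_1(a,X^{(i)})=\sum_j a_j C_{j,i}.
\]
Hence $E_\cX(a)=\sum_j a_jC_j$. This map is linear, its image is $\cC$, and it is injective because the $C_j$ are linearly independent. So $\cX$ is a generator tuple.

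For the second part, let $\cX$ and $\cX'$ both generate $\cC$. Then $E_\cX$ and $E_{\cX'}$ are both linear isomorphisms $\F_q^k\to\cC$. Set
\[
\phi:=E_\cX^{-1}\circ E_{\cX'}\in\GL(\F_q^k),
\]
and let $Q$ be its matrix, so that $E_{\cX'}(a)=E_\cX(Qa)$ for all $a$. Taking $a=e_j$ shows, blockwise, that the $j$-th slice of $X'^{(i)}$ equals $m_1(Qe_j,X^{(i)})$. Now compare with $m_1(P,X^{(i)})$ for $P\in\GL(k,q)$ acting on the first factor. Its $j$-th slice is
\[
\sum_r (Pu_r)_j\, v_r\otimes w_r=m_1(e_j^\top P,X^{(i)}),
\]
so $P$ must have rows $e_j^\top P=(Qe_j)^\top$. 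The natural choice is therefore $P:=Q^\top$, which is invertible, and then every slice of $X'^{(i)}$ agrees with the corresponding slice of $m_1(P,X^{(i)})$. A tensor in $\F_q^{k\times n_i\times m_i}$ is determined by its $k$ first-slices, so $X'^{(i)}=m_1(P,X^{(i)})$ for every $i$. Crucially, the same $P$ works for all blocks, since $\phi$ was defined on the whole tuple.

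The only delicate step is bookkeeping of conventions: whether $m_1(a,X)$ for a vector $a$ means $a^\top$ acting as a $1\times k$ matrix, and hence whether $P$ is $Q$ or $Q^\top$. I will fix the convention explicitly using $m_1(e_j,X)$ as the $j$-th slice, consistent with the definition of $\ssone$, and verify the transpose once. Everything else is linear algebra, relying only on well-definedness of $m_1$ and the identity $m_1(A,m_1(B,X))=m_1(AB,X)$.
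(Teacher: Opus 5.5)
Your proof is correct and follows the paper's argument: you use the same slice-stacking construction $X^{(i)}=\sum_j e_j\otimes C_{j,i}$ for existence, and the automorphism $E_\cX^{-1}\circ E_{\cX'}$ of $\F_q^k$ for the basis change, with the same $P$ serving every block. The only difference is bookkeeping. The paper treats $a$ as a row vector and writes $E_{\cX'}(a)=E_\cX(aP)$, so the identity $m_1(aP,X^{(i)})=m_1(a,m_1(P,X^{(i)}))$ gives the conclusion directly, without your column-vector matrix $Q$ and the transpose $P=Q^\top$.
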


\begin{proof}
\emph{Existence.} Choose a basis $A^{(1)},\dots,A^{(k)}$ of $\cC$ and write each element in block form as $A^{(j)}=(A^{(j)}_1,\dots,A^{(j)}_t)$ with $A^{(j)}_i\in\F_q^{n_i\times m_i}$. For each $i\in[t]$, define
\[
X^{(i)}:=\sum_{j=1}^{k}e_j\otimes A^{(j)}_i\in\F_q^{k\times n_i\times m_i}.
\]
Then for any $a=(a_1,\dots,a_k)\in\F_q^k$,
\[
m_1(a,X^{(i)})=\sum_{j=1}^{k}(a\cdot e_j)A^{(j)}_i=\sum_{j=1}^{k}a_jA^{(j)}_i,
\]
so $E_\cX(a)=\sum_{j}a_jA^{(j)}\in\cC$. Since $\{A^{(j)}\}$ is a basis, $E_\cX$ is injective with image $\cC$.

\emph{Basis-change relation.} The maps $E_\cX,E_{\cX'}\colon\F_q^k\to\cM(\bn,\bm)$ are injective $\F_q$-linear maps with the same image $\cC$. Hence there exists $P\in\GL(k,q)$ such that $E_{\cX'}(a)=E_\cX(aP)$ for all $a\in\F_q^k$, viewing $a$ as a row vector. Block by block,
\[
m_1(a,X'^{(i)})=m_1(aP,X^{(i)})=m_1(a,m_1(P,X^{(i)})),
\]
where we used $m_1(aP,X^{(i)})=m_1(a,m_1(P,X^{(i)}))$. Since this holds for all $a$, we conclude $X'^{(i)}=m_1(P,X^{(i)})$.
\end{proof}
We next record how the slice space of each component tensor in a generator
tuple is related to the corresponding block projection of the code. Recall
that a tensor \(X\in\mathbb F_q^{k\times n\times m}\) is called 
\emph{1-nondegenerate} if $\dim_{\mathbb F_q}\operatorname{ss}_1(X)=k.$ Equivalently, its first-mode slices are linearly independent.

\begin{lemma}
\label{lem:slice-proj}
Let $\cX$ be a generator tuple for $\cC$ and let $\pi_i\colon\cM(\bn,\bm)\to\F_q^{n_i\times m_i}$ denote projection onto the $i$-th block. Then $\ssone(X^{(i)})=\pi_i(\cC)$. In particular, $X^{(i)}$ is $1$-nondegenerate if and only if $\pi_i|_\cC$ is injective.
\end{lemma}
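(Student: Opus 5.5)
The plan is to compute both sides directly from the definitions of the slice space and of a generator tuple. Then we read off the injectivity statement by a dimension count.

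\textbf{Step 1: describe the slice space.} By definition, $\ssone(X^{(i)})$ is the span of the slices $m_1(e_j,X^{(i)})$ for $j\in[k]$. The map $a\mapsto m_1(a,X^{(i)})$ is $\F_q$-linear in $a$, since $m_1$ is defined by applying $a$ to the first factor of any decomposition. Hence
\[
\ssone(X^{(i)})=\bigl\{m_1(a,X^{(i)}):a\in\F_q^k\bigr\}.
\]
The inclusion ``$\supseteq$'' holds because $m_1(a,X^{(i)})=\sum_j a_j\,m_1(e_j,X^{(i)})$. The inclusion ``$\subseteq$'' holds because every element of the span has this form for some $a$.

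\textbf{Step 2: identify the image with $\pi_i(\cC)$.} By Definition~\ref{def:gentuple}, every element of $\cC$ equals $E_\cX(a)$ for some $a\in\F_q^k$. Applying $\pi_i$ gives
\[
\pi_i(E_\cX(a))=m_1(a,X^{(i)}),
\]
so $\pi_i(\cC)=\{m_1(a,X^{(i)}):a\in\F_q^k\}$. Combined with Step 1, this yields $\ssone(X^{(i)})=\pi_i(\cC)$.

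\textbf{Step 3: the injectivity criterion.} Since $E_\cX$ is an isomorphism from $\F_q^k$ onto $\cC$, the map $\pi_i|_\cC$ is injective if and only if $\pi_i\circ E_\cX$ is injective. That composite is a linear map $\F_q^k\to\F_q^{n_i\times m_i}$ with image $\ssone(X^{(i)})$. By rank--nullity, it is injective if and only if $\dim\ssone(X^{(i)})=k$, which is exactly $1$-nondegeneracy of $X^{(i)}$.

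No step is a genuine obstacle. The only point needing care is the linearity of $a\mapsto m_1(a,X^{(i)})$ for a row vector $a$, viewed as a $1\times k$ matrix. This follows from the well-definedness of $m_1$ recalled from~\cite{BNRS19}, together with linearity in the first factor of each simple tensor.
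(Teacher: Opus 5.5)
Your proposal is correct and takes essentially the same route as the paper. Both arguments rest on the identity $\pi_i(E_\cX(a))=m_1(a,X^{(i)})$ and read off the injectivity criterion from $\dim\pi_i(\cC)=k$; the only cosmetic difference is that you describe $\ssone(X^{(i)})$ as the image of the linear map $a\mapsto m_1(a,X^{(i)})$, whereas the paper uses that $\{E_\cX(e_j)\}_{j=1}^{k}$ is a basis of $\cC$.
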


\begin{proof}
For any $a\in\F_q^k$, $m_1(a,X^{(i)})=\pi_i(E_\cX(a))$. Hence
\[
\ssone(X^{(i)})=\langle m_1(e_j,X^{(i)}):j\in[k]\rangle=\langle\pi_i(E_\cX(e_j)):j\in[k]\rangle=\pi_i(\cC),
\]
since $\{E_\cX(e_j)\}_{j=1}^{k}$ is a basis of $\cC$. The second statement follows because $\pi_i|_\cC$ is injective if and only if $\dim\pi_i(\cC)=k$.
\end{proof}

\begin{example}
\label{ex:gentuple}
Let $q=2$, $t=2$, $\bn=\bm=(2,2)$. Let $I_k$ be the $k \times k$ identity matrix and define
\[
A^{(1)}=(I_2,0),\qquad A^{(2)}=(0,I_2).
\]
A generator tuple for $\cC=\langle A^{(1)},A^{(2)}\rangle$ is $\cX=(X^{(1)},X^{(2)})$ with $X^{(1)}=e_1\otimes I_2$ and $X^{(2)}=e_2\otimes I_2$, so that $E_\cX(a_1,a_2)=(a_1I_2,a_2I_2)$. Each nonzero codeword has sum-rank weight at least $2$, so $\dSR(\cC)=2$.
\end{example}

\subsection{Block tensor rank}

\begin{definition}
\label{def:btrk}
Let \(\cX=(X^{(1)},\dots,X^{(t)})\), where
\(X^{(i)}\in\F_q^{k\times n_i\times m_i}\). The \emph{block tensor rank} of
\(\cX\), denoted \(\btrk(\cX)\), is the smallest integer \(R\ge0\) for which there
exist a function \(\sigma:[R]\to[t]\), vectors \(u_r\in\F_q^k\), and vectors
\(v_r\in\F_q^{n_{\sigma(r)}}\), \(w_r\in\F_q^{m_{\sigma(r)}}\), for \(r\in[R]\),
such that
\[
X^{(i)}
=
\sum_{r\in\sigma^{-1}(i)}
u_r\otimes v_r\otimes w_r
\qquad
\text{for every } i\in[t].
\]
\end{definition}

The block tensor rank is a numerical invariant of a single sum-rank code. It should be distinguished from tensor-isomorphism approaches to sum-rank
code equivalence, where one studies whether two codes lie in the same orbit under the relevant isometry group, as in~\cite{d2024monomial}.
Here, we instead measure the minimum number of sum-rank weight-one tensors needed to span a given code. The next proposition characterises the block tensor rank in three equivalent ways. It is the sum-rank analogue of~\cite[Proposition~3.4]{BNRS19}.

\begin{proposition}
\label{prop:btrk-char}
Let $\cX=(X^{(1)},\dots,X^{(t)})$ be a generator tuple for $\cC=\Image(E_\cX)$, and let $R\ge 0$. The following statements are equivalent.
\begin{enumerate}[label=\textup{(\arabic*)}]
\item $\btrk(\cX)\le R$.
\item There exist block-simple tensors $B_1,\dots,B_R\in\cB(\bn,\bm)$ such that $\cC\subseteq\langle B_1,\dots,B_R\rangle$.
\item There exist a function $\sigma\colon[R]\to[t]$, matrices
\[
V\in\F_q^{N\times R},\qquad W\in\F_q^{M\times R},
\]
whose $r$-th columns are supported on the $\sigma(r)$-th row-block and column-block respectively, together with diagonal matrices $D_1,\dots,D_k\in\F_q^{R\times R}$, such that
\[
\ssone(X^{(i)})=\langle\pi_i(VD_jW^\top):j\in[k]\rangle
\]
for every $i\in[t]$, where $\pi_i$ denotes projection to the $i$-th block under $\cM(\bn,\bm)\hookrightarrow\F_q^{N\times M}$.
\end{enumerate}
\end{proposition}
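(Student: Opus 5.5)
The plan is to prove the cycle $(1)\Rightarrow(2)\Rightarrow(1)$, and then $(1)\Rightarrow(3)\Rightarrow(2)$. Everything rests on two observations. First, for any tensor $X\in\F_q^{k\times n\times m}$ we have $X=\sum_{j=1}^k e_j\otimes m_1(e_j,X)$. Second, if the $r$-th columns of $V$ and $W$ are supported on the $\sigma(r)$-th row-block and column-block, then
\[
VD_jW^\top=\sum_{r=1}^R (D_j)_{rr}\,\widetilde v_r\widetilde w_r^\top .
\]
This matrix is block diagonal, and its projections are $\pi_i(VD_jW^\top)=\sum_{r\in\sigma^{-1}(i)}(D_j)_{rr}\,v_rw_r^\top$, where $v_r,w_r$ denote the restrictions of the columns to their blocks. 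A small bookkeeping point recurs throughout. Block-simple tensors require nonzero $v,w$. Terms with a zero vector vanish and can simply be discarded, and if exactly $R$ elements are needed we pad with arbitrary block-simple tensors. Enlarging a spanning family is harmless, and in Definition~\ref{def:btrk} zero vectors are allowed anyway.

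For $(1)\Rightarrow(2)$, take a decomposition $X^{(i)}=\sum_{r\in\sigma^{-1}(i)}u_r\otimes v_r\otimes w_r$ and set $B_r:=\emb_{\sigma(r)}(v_rw_r^\top)$. Then
\[
E_\cX(a)=\bigl(m_1(a,X^{(i)})\bigr)_i=\sum_{r=1}^R (a\cdot u_r)\,B_r,
\]
so $\cC\subseteq\langle B_1,\dots,B_R\rangle$. For $(2)\Rightarrow(1)$, write $B_r=\emb_{\sigma(r)}(v_rw_r^\top)$. Expand each basis vector as $E_\cX(e_j)=\sum_r c_{jr}B_r$, and put $u_r:=(c_{1r},\dots,c_{kr})$. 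Projecting to block $i$ gives $m_1(e_j,X^{(i)})=\sum_{r\in\sigma^{-1}(i)}c_{jr}v_rw_r^\top$. The first observation then yields $X^{(i)}=\sum_{r\in\sigma^{-1}(i)}u_r\otimes v_r\otimes w_r$, which is exactly the decomposition required in Definition~\ref{def:btrk}. Note that this step works for the given generator tuple $\cX$ and not merely for some tuple generating $\cC$.

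For $(1)\Rightarrow(3)$, take the same decomposition. Let the columns of $V$ and $W$ be the padded vectors $\widetilde v_r,\widetilde w_r$, and let $D_j:=\diag\bigl((u_1)_j,\dots,(u_R)_j\bigr)$. By the second observation, $\pi_i(VD_jW^\top)=\sum_{r\in\sigma^{-1}(i)}(u_r)_j v_rw_r^\top=m_1(e_j,X^{(i)})$. Hence the span over $j$ is $\ssone(X^{(i)})$ for every $i$, which is statement (3) as worded.

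For $(3)\Rightarrow(2)$, the hypothesis is only blockwise: $\ssone(X^{(i)})$ equals the span of the projections $\pi_i(VD_jW^\top)$. By the second observation each of these lies in $\langle v_rw_r^\top : r\in\sigma^{-1}(i)\rangle$. Lemma~\ref{lem:slice-proj} identifies $\ssone(X^{(i)})$ with $\pi_i(\cC)$, so $\pi_i(\cC)\subseteq\langle v_rw_r^\top:r\in\sigma^{-1}(i)\rangle$. Since every $X\in\cC$ satisfies $X=\sum_i\emb_i(\pi_i(X))$, we get $\cC\subseteq\bigoplus_i\emb_i(\pi_i(\cC))\subseteq\langle \emb_{\sigma(r)}(v_rw_r^\top):r\in[R]\rangle$. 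After discarding zero terms and padding, this gives $R$ block-simple tensors, which is (2).

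The main subtlety is this last step. Condition (3) does not assert that the $VD_jW^\top$ reassemble a basis of $\cC$, since the coupling between blocks is lost. The argument must therefore pass through the coarser containment $\cC\subseteq\bigoplus_i\emb_i(\pi_i(\cC))$ rather than reconstruct $\cX$ directly, and then recover (1) via $(2)\Rightarrow(1)$.
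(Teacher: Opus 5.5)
Your proposal is correct and follows the paper's proof essentially step for step. It uses the same cycle $(1)\Rightarrow(2)\Rightarrow(1)$ and $(1)\Rightarrow(3)\Rightarrow(2)$, and your $(3)\Rightarrow(2)$ goes through $\ssone(X^{(i)})=\pi_i(\cC)$ and re-embeds blockwise exactly as the paper does; your extra bookkeeping (discarding terms with zero $v_r$ or $w_r$ and padding with arbitrary block-simple tensors) tidies a point the paper passes over silently.
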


\begin{proof}
\emph{(1)$\Rightarrow$(2).} Suppose $\cX$ admits a decomposition of length $R$ as in Definition~\ref{def:btrk}, with data $(\sigma,u_r,v_r,w_r)$. Set $B_r:=\emb_{\sigma(r)}(v_rw_r^\top)$, which is block-simple. For $a\in\F_q^k$,
\begin{align*}
E_\cX(a)
&=\sum_{i=1}^{t}\emb_i(m_1(a,X^{(i)}))\\
&=\sum_{i=1}^{t}\emb_i\Bigl(\sum_{r\in\sigma^{-1}(i)}(a\cdot u_r)v_rw_r^\top\Bigr)\\
&=\sum_{r=1}^{R}(a\cdot u_r)\,\emb_{\sigma(r)}(v_rw_r^\top)
=\sum_{r=1}^{R}(a\cdot u_r)B_r,
\end{align*}
so every codeword is a linear combination of the $B_r$.

\emph{(2)$\Rightarrow$(1).} Suppose $\cC\subseteq\langle B_1,\dots,B_R\rangle$ where $B_r=\emb_{\sigma(r)}(v_rw_r^\top)$. For each $j\in[k]$, expand $E_\cX(e_j)=\sum_{r=1}^{R}\lambda_{j,r}B_r$ and set $u_r:=(\lambda_{1,r},\dots,\lambda_{k,r})^\top\in\F_q^k$. Projecting onto block $i$ gives
\[
m_1(e_j,X^{(i)})=\pi_i(E_\cX(e_j))=\sum_{r\in\sigma^{-1}(i)}\lambda_{j,r}\,v_rw_r^\top,
\]
which is the $j$-th slice of $\sum_{r\in\sigma^{-1}(i)}u_r\otimes v_r\otimes w_r$. Since slices determine a $3$-tensor uniquely, $X^{(i)}=\sum_{r\in\sigma^{-1}(i)}u_r\otimes v_r\otimes w_r$, a decomposition of length $R$.

\emph{(1)$\Rightarrow$(3).} Given a length-$R$ decomposition $(\sigma,u_r,v_r,w_r)$, define $V\in\F_q^{N\times R}$ by placing $v_r$ in the $\sigma(r)$-th row-block of the $r$-th column (zero elsewhere), and similarly for $W$. For each $j\in[k]$ let $D_j:=\diag(u_{j,r}:r\in[R])$, where $u_{j,r}$ is the $j$-th coordinate of $u_r$. Then
\[
VD_jW^\top=\sum_{r=1}^{R}u_{j,r}\,\widetilde{v}_r\widetilde{w}_r^\top,
\]
where the padded matrix $\widetilde{v}_r\widetilde{w}_r^\top$ has its only nonzero block at the $(\sigma(r),\sigma(r))$ position, equal to $v_rw_r^\top$. Thus
\[
\pi_i(VD_jW^\top)=\sum_{r\in\sigma^{-1}(i)}u_{j,r}\,v_rw_r^\top=m_1(e_j,X^{(i)}),
\]
and taking the span over $j$ gives the desired identity.

\emph{(3)$\Rightarrow$(2).} For each $r\in[R]$, let $v_r$ be the nonzero portion of the $r$-th column of $V$ in row-block $\sigma(r)$, and $w_r$ the nonzero portion of the $r$-th column of $W$ in column-block $\sigma(r)$. Set $B_r:=\emb_{\sigma(r)}(v_rw_r^\top)$. By the column-support hypothesis, the $i$-th diagonal block of $VD_jW^\top$ lies in $\langle v_rw_r^\top:r\in\sigma^{-1}(i)\rangle$. Combined with the identity in (3),
\[
\pi_i(\cC)=\ssone(X^{(i)})\subseteq\langle\pi_i(B_r):r\in\sigma^{-1}(i)\rangle.
\]
For any $X=(X_1,\dots,X_t)\in\cC$, each $X_i$ is therefore a linear combination of $\{\pi_i(B_r):r\in\sigma^{-1}(i)\}$, and embedding these back into the corresponding blocks shows $X\in\langle B_1,\dots,B_R\rangle$. This gives (2), and we already showed (2)$\Rightarrow$(1).
\end{proof}

\begin{proposition}
\label{prop:btrk-welldefined}
If $\cX$ and $\cX'$ are two generator tuples for the same sum-rank code $\cC$, then $\btrk(\cX)=\btrk(\cX')$.
\end{proposition}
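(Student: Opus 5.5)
The most direct route goes through Proposition~\ref{prop:btrk-char}. The equivalence (1)$\Leftrightarrow$(2) there says that $\btrk(\cX)\le R$ holds if and only if there exist block-simple tensors $B_1,\dots,B_R\in\cB(\bn,\bm)$ with $\cC\subseteq\langle B_1,\dots,B_R\rangle$. This second condition mentions only the code $\cC=\Image(E_\cX)$, not the particular generator tuple. Since $\Image(E_\cX)=\Image(E_{\cX'})=\cC$, we get, for every $R\ge 0$,
\[
\btrk(\cX)\le R \iff \cC \text{ lies in the span of } R \text{ block-simple tensors} \iff \btrk(\cX')\le R .
\]
Taking the smallest such $R$ gives $\btrk(\cX)=\btrk(\cX')$.

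The minimum exists, so the invariant is finite and attained. By Lemma~\ref{lem:blocksimple-span}, $\cC$ is spanned by finitely many block-simple tensors, for instance at most $\sum_i\min(n_i,m_i)$ of them. Hence the set of admissible $R$ is a nonempty set of nonnegative integers and has a least element.

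A more hands-on alternative transports decompositions directly. By Lemma~\ref{lem:gentuple-existence} there is $P\in\GL(k,q)$ with $X'^{(i)}=m_1(P,X^{(i)})$ for all $i$. Suppose $X^{(i)}=\sum_{r\in\sigma^{-1}(i)}u_r\otimes v_r\otimes w_r$. Applying $m_1(P,\cdot)$ to each block gives
\[
X'^{(i)}=\sum_{r\in\sigma^{-1}(i)}(Pu_r)\otimes v_r\otimes w_r ,
\]
which uses the same $\sigma$ and length $R$. Hence $\btrk(\cX')\le\btrk(\cX)$, and the reverse inequality follows by symmetry using $P^{-1}$.

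There is no real obstacle in either route. The one point to check in the hands-on version is that the basis-change convention of Lemma~\ref{lem:gentuple-existence} (row vectors $aP$) matches the action $m_1(P,\cdot)$, so that $P^{-1}$ indeed reverses it. The first route avoids this bookkeeping entirely, which is why I would present it as the main argument.
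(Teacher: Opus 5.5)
Your proof is correct. Your secondary, hands-on route is exactly the paper's proof. It takes $P\in\GL(k,q)$ from Lemma~\ref{lem:gentuple-existence}, pushes a length-$R$ decomposition through $m_1(P,\cdot)$ to obtain the terms $(Pu_r)\otimes v_r\otimes w_r$ with the same $\sigma$, and concludes by symmetry.

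Your main route is genuinely different. It never uses the basis-change lemma. Instead it observes that the equivalence (1)$\Leftrightarrow$(2) of Proposition~\ref{prop:btrk-char} rewrites $\btrk(\cX)\le R$ as a condition on $\Image(E_\cX)=\cC$ alone. That equivalence is proved there for an arbitrary generator tuple, and independently of well-definedness, so there is no circularity. Hence the sets of admissible $R$ for $\cX$ and $\cX'$ coincide.

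\begin{itemize}
\item Your route is more economical. It also establishes, at the same time, the ``equivalently'' clause of Definition~\ref{def:btrk-code}.
\item The paper's route is self-contained in tensor language and does not lean on Proposition~\ref{prop:btrk-char}. It also shows explicitly that the block-simple factors $v_rw_r^\top$ are unchanged and only the first-mode vectors are transformed by $P$.
\end{itemize}

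There is one harmless slip in your finiteness remark. Lemma~\ref{lem:blocksimple-span} shows that a \emph{single element} needs at most $\sum_i\min(n_i,m_i)$ block-simple tensors, but the whole code may need more. For example, with $t=1$ and $\cC=\F_q^{2\times2}$, the block tensor rank is $4>2$. A correct finite bound is $k\sum_i\min(n_i,m_i)$ (decompose a basis) or $\sum_i n_im_i$ (matrix units), as in Proposition~\ref{prop:elementary-upper-btrk}. Any finite bound suffices for the minimum to exist, so the argument is unaffected.
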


\begin{proof}
By Lemma~\ref{lem:gentuple-existence}, there exists $P\in\GL(k,q)$ with $X'^{(i)}=m_1(P,X^{(i)})$ for every $i$. Suppose $\cX$ admits a length-$R$ decomposition $X^{(i)}=\sum_{r\in\sigma^{-1}(i)}u_r\otimes v_r\otimes w_r$. Applying $m_1(P,\cdot)$,
\[
X'^{(i)}=\sum_{r\in\sigma^{-1}(i)}(Pu_r)\otimes v_r\otimes w_r,
\]
which is a length-$R$ decomposition of $\cX'$ using the same block-simple tensors $\emb_{\sigma(r)}(v_rw_r^\top)$. Hence $\btrk(\cX')\le\btrk(\cX)$, and the reverse inequality follows by symmetry.
\end{proof}

\begin{definition}
\label{def:btrk-code}
The \emph{block tensor rank} of a nonzero sum-rank code $\cC$ is $\btrk(\cC):=\btrk(\cX)$, where $\cX$ is any generator tuple of $\cC$; equivalently, $\btrk(\cC)$ is the smallest integer $R$ such that $\cC$ is contained in the $\F_q$-linear span of $R$ block-simple tensors.
\end{definition}

\section{Blockwise Decomposition and Projection-wise Bounds}
\label{sec:blockwise-projection}

\subsection{The blockwise decomposition identity}
\label{subsec:blockwise}

We now state and prove the central structural identity. It reduces the computation of $\btrk(\cC)$ to a tensor-rank computation for each block projection separately, and is the basis for almost every later result.

\begin{proposition}[Blockwise decomposition]
\label{prop:blockwise-decomposition}
Let $\cC\subseteq\cM(\bn,\bm)$ be a nonzero sum-rank code, and set $\cC_i:=\pi_i(\cC)\subseteq\F_q^{n_i\times m_i}$. Then
\[
\btrk(\cC)\;=\;\sum_{i=1}^{t}\trk(\cC_i),
\]
where $\trk(\cC_i)$ is the tensor rank of $\cC_i$ as a rank-metric code in the sense of~\cite{BNRS19}, with the convention $\trk(\{0\})=0$.
\end{proposition}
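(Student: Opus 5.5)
We will prove the two inequalities separately, using the characterization in Proposition~\ref{prop:btrk-char}(2) for both sides. Throughout we use that $\trk(\cC_i)$ is the minimum number of rank-one matrices in $\F_q^{n_i\times m_i}$ whose span contains $\cC_i$. We adopt the convention $\trk(\{0\})=0$, realized by the empty family.

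\emph{Upper bound $\btrk(\cC)\le\sum_i\trk(\cC_i)$.} For each $i$ we choose rank-one matrices $v_{i,s}w_{i,s}^\top$, $s\in[\trk(\cC_i)]$, whose span contains $\cC_i$. We then consider the block-simple tensors $\emb_i(v_{i,s}w_{i,s}^\top)$ over all $i$ and $s$; there are $\sum_i\trk(\cC_i)$ of them. For $X=(X_1,\dots,X_t)\in\cC$ we write $X=\sum_i\emb_i(X_i)$. Since each $X_i\in\cC_i$ is a combination of the $v_{i,s}w_{i,s}^\top$, and $\emb_i$ is linear, $X$ lies in the span of the chosen block-simple tensors. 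Proposition~\ref{prop:btrk-char} then gives the bound.

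\emph{Lower bound $\btrk(\cC)\ge\sum_i\trk(\cC_i)$.} Let $R=\btrk(\cC)$ and take block-simple $B_1,\dots,B_R$ with $\cC\subseteq\langle B_1,\dots,B_R\rangle$. Write $B_r=\emb_{\sigma(r)}(v_rw_r^\top)$ and set $R_i:=|\sigma^{-1}(i)|$, so that $\sum_iR_i=R$. We apply the linear map $\pi_i$. Since $\pi_i(B_r)=0$ for $\sigma(r)\ne i$ and $\pi_i(B_r)=v_rw_r^\top$ for $\sigma(r)=i$, we obtain $\cC_i=\pi_i(\cC)\subseteq\langle v_rw_r^\top:r\in\sigma^{-1}(i)\rangle$. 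This family consists of $R_i$ rank-one matrices, so $\trk(\cC_i)\le R_i$. If $\cC_i=\{0\}$, the inequality is trivial by the convention. Summing over $i$ gives $\sum_i\trk(\cC_i)\le R$.

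There is no real obstacle. The only point needing care is the consistency of the two descriptions of $\trk$: the spanning description used above and the generator-tensor description from \cite{BNRS19}, which the paper cites as equivalent. A second small point is that the minimal decomposition need not make each block family minimal, which is why only the inequality $\trk(\cC_i)\le R_i$ is claimed. Alternatively, one can phrase the proof via Definition~\ref{def:btrk}: a block decomposition of the generator tuple is exactly a tuple of independent tensor decompositions of the $X^{(i)}$. Using $\ssone(X^{(i)})=\cC_i$ from Lemma~\ref{lem:slice-proj}, one concludes $\btrk(\cX)=\sum_i\trk(X^{(i)})=\sum_i\trk(\cC_i)$.
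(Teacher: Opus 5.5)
Your proof is correct and follows essentially the same route as the paper. For the lower bound you project a minimal block-simple spanning family onto each block and count the rank-one matrices supported in block $i$; for the upper bound you embed blockwise rank-one spanning families via $\emb_i$. The alternative you sketch via generator tuples and Lemma~\ref{lem:slice-proj} is also valid, since the tensor rank of any (possibly $1$-degenerate) tensor equals the minimum number of rank-one matrices whose span contains its first slice space.
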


\begin{proof}
\emph{Lower bound: $\btrk(\cC)\ge\sum_i\trk(\cC_i)$.}
Let $R=\btrk(\cC)$ and let $B_1,\dots,B_R\in\cB(\bn,\bm)$ be block-simple tensors whose span contains $\cC$. For each $i\in[t]$, set
\[
R_i:=\{r\in[R]:B_r\in\emb_i(\F_q^{n_i\times m_i})\},
\]
so that the $R_i$ partition $[R]$. Projecting onto block $i$,
\[
\cC_i=\pi_i(\cC)\subseteq\langle\pi_i(B_r):r\in[R]\rangle.
\]
For $r\notin R_i$ we have $\pi_i(B_r)=0$; for $r\in R_i$, $\pi_i(B_r)$ is a rank-one matrix in $\F_q^{n_i\times m_i}$. Hence $\cC_i\subseteq\langle\pi_i(B_r):r\in R_i\rangle$, giving $\trk(\cC_i)\le|R_i|$. Summing,
\[
\sum_{i=1}^{t}\trk(\cC_i)\le\sum_{i=1}^{t}|R_i|=R=\btrk(\cC).
\]

\emph{Upper bound: $\btrk(\cC)\le\sum_i\trk(\cC_i)$.}
For each $i\in[t]$, set $s_i:=\trk(\cC_i)$ and choose rank-one matrices $A_{i,1},\dots,A_{i,s_i}\in\F_q^{n_i\times m_i}$ whose $\F_q$-span contains $\cC_i$. The family
\[
\bigl\{\emb_i(A_{i,j}):i\in[t],\ j\in[s_i]\bigr\}\subseteq\cB(\bn,\bm)
\]
has cardinality $\sum_is_i$. For any $X=(X_1,\dots,X_t)\in\cC$, write $X_i=\sum_{j}\mu_{i,j}A_{i,j}$. Then $X=\sum_{i,j}\mu_{i,j}\emb_i(A_{i,j})$, showing $\btrk(\cC)\le\sum_is_i$.
\end{proof}

\subsection{Projection-wise and elementary bounds}

The identity gives an immediate projection-wise lower bound, complementary to the coordinate-code bound of Section~\ref{sec:coordinate-bounds}.

\begin{corollary}[Projection-wise lower bound]
\label{cor:projection-wise-bound}
For each $i\in[t]$ with $\cC_i:=\pi_i(\cC)\ne\{0\}$, write $k_i:=\dim_{\F_q}\cC_i$ and $d_i:=d_{\mathrm{rk}}(\cC_i)$ for the minimum rank distance of $\cC_i$. Then
\[
\btrk(\cC)\ge\sum_{i:\cC_i\ne 0}N_q(k_i,d_i)\ge\sum_{i:\cC_i\ne 0}(k_i+d_i-1).
\]
\end{corollary}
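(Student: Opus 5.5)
By Proposition~\ref{prop:blockwise-decomposition}, $\btrk(\cC)=\sum_i\trk(\cC_i)$, and blocks with $\cC_i=\{0\}$ contribute zero. It is therefore enough to prove, for each nonzero projection $\cC_i$, the single-block inequality
\[
\trk(\cC_i)\ge N_q(k_i,d_i).
\]
Summing over $i$ then gives the first inequality. The second inequality follows term by term from the classical Singleton bound $N_q(k_i,d_i)\ge k_i+d_i-1$.

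To prove the single-block inequality, I would set $s:=\trk(\cC_i)$ and choose rank-one matrices $v_1w_1^\top,\dots,v_sw_s^\top\in\F_q^{n_i\times m_i}$ whose span contains $\cC_i$. Fix a basis $Y_1,\dots,Y_{k_i}$ of $\cC_i$ and write $Y_j=\sum_{r=1}^{s}\lambda_{j,r}v_rw_r^\top$. This defines the linear map
\[
\phi:\cC_i\to\F_q^{s},\qquad \sum_j a_jY_j\mapsto\Bigl(\sum_j a_j\lambda_{j,r}\Bigr)_{r\in[s]}.
\]
By construction, every $Y\in\cC_i$ satisfies $Y=\sum_r\phi(Y)_r\,v_rw_r^\top$. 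This identity gives two facts about $\phi$:
\begin{itemize}
\item[(a)] $\phi$ is injective, since $\phi(Y)=0$ forces $Y=0$;
\item[(b)] $\rk(Y)\le\wt_H(\phi(Y))$, because $Y$ is a sum of $\wt_H(\phi(Y))$ rank-one matrices.
\end{itemize}
Hence $D:=\phi(\cC_i)$ is a linear $[s,k_i]_q$ code. For every nonzero $Y$ we have $\wt_H(\phi(Y))\ge\rk(Y)\ge d_i$, so the minimum Hamming distance of $D$ is at least $d_i$. By the definition of $N_q$, this gives $s\ge N_q(k_i,d_i)$.

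The one delicate point is that the coefficients $\lambda_{j,r}$ need not be unique when the rank-one matrices are linearly dependent. I avoid this by fixing one expansion for each basis element and extending linearly, so $\phi$ is well defined and linear. Since $Y$ always equals the corresponding combination of the $v_rw_r^\top$, facts (a) and (b) hold for this choice regardless of non-uniqueness. This is the same argument as the Hamming-code lower bound for tensor rank in \cite{BNRS19}, and the blockwise decomposition reduces the sum-rank statement to it.
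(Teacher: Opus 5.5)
Your proof is correct and follows the paper's argument. The paper likewise combines Proposition~\ref{prop:blockwise-decomposition} with the per-block bound $\trk(\cC_i)\ge N_q(k_i,d_i)\ge k_i+d_i-1$; the only difference is that it cites \cite[Corollary~4.15]{BNRS19} for the per-block inequality, whereas you reprove it directly through the coordinate code, and your device of fixing one expansion per basis element correctly avoids needing the rank-one matrices to be linearly independent.
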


\begin{proof}
By Proposition~\ref{prop:blockwise-decomposition}, $\btrk(\cC)=\sum_i\trk(\cC_i)$. The coordinate-code argument of~\cite[Corollary~4.15]{BNRS19} gives $\trk(\cC_i)\ge N_q(k_i,d_i)\ge k_i+d_i-1$ for every nonzero $\cC_i$, and zero projections contribute zero to both sides.
\end{proof}

We also record an elementary upper bound, addressing how large $\btrk(\cC)$ can be in terms of the projection dimensions.

\begin{proposition}[Elementary upper bound]
\label{prop:elementary-upper-btrk}
Let $\cC\subseteq\cM(\bn,\bm)$ and set $k_i:=\dim\pi_i(\cC)$. Then
\[
\btrk(\cC)\le\sum_{i=1}^{t}\min\bigl\{n_im_i,\ k_i\min\{n_i,m_i\}\bigr\}.
\]
In particular, $\btrk(\cC)\le\sum_{i=1}^{t}n_im_i$, and if every block has size $2\times 2$, then $\btrk(\cC)\le 4t$.
\end{proposition}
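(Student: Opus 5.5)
By Proposition~\ref{prop:blockwise-decomposition}, $\btrk(\cC)=\sum_{i=1}^t\trk(\cC_i)$ with $\cC_i=\pi_i(\cC)$. It therefore suffices to prove, for each block separately, the per-block bound
\[
\trk(\cC_i)\le\min\{n_im_i,\ k_i\min\{n_i,m_i\}\}.
\]
Summing over $i$ then gives the claim. If $\cC_i=\{0\}$, then $k_i=0$, and both sides vanish by the convention $\trk(\{0\})=0$. So from now on we fix $i$ with $\cC_i\neq\{0\}$ and bound the two terms of the minimum separately, using the characterization of $\trk$ as the least number of rank-one matrices whose span contains $\cC_i$.

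\emph{First term.} The $n_im_i$ elementary matrices $E_{ab}=e_ae_b^\top$, for $a\in[n_i]$ and $b\in[m_i]$, are rank one and span all of $\F_q^{n_i\times m_i}\supseteq\cC_i$. Hence $\trk(\cC_i)\le n_im_i$.

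\emph{Second term.} Choose a basis $Y_1,\dots,Y_{k_i}$ of $\cC_i$. Each $Y_j$ has rank at most $\min\{n_i,m_i\}$, so, as in the proof of Lemma~\ref{lem:blocksimple-span}, it can be written as a sum of at most $\min\{n_i,m_i\}$ rank-one matrices (for instance via a rank factorization $Y_j=\sum_s v_sw_s^\top$). Collecting all these rank-one summands over $j$ gives at most $k_i\min\{n_i,m_i\}$ rank-one matrices. Their span contains every $Y_j$, and hence contains $\cC_i$. So $\trk(\cC_i)\le k_i\min\{n_i,m_i\}$.

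Taking the smaller of the two bounds gives the per-block inequality, and summing gives the main statement.

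\emph{The ``in particular'' claims.} Dropping the second term of each minimum yields $\btrk(\cC)\le\sum_i n_im_i$. For $2\times2$ blocks this sum is $4t$.

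There is no real obstacle here. The only point needing care is that each bound is applied to the projection $\cC_i$, which is a linear space of dimension $k_i$, rather than to $\cC$ itself. This is exactly what Proposition~\ref{prop:blockwise-decomposition} allows, together with the convention for zero projections.
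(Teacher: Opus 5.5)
Your proposal is correct and follows essentially the same route as the paper: reduce to the projections $\cC_i$ via Proposition~\ref{prop:blockwise-decomposition}, then bound each $\trk(\cC_i)$ by $n_im_i$ using the matrix units and by $k_i\min\{n_i,m_i\}$ by splitting a basis of $\cC_i$ into rank-one summands. Your explicit treatment of zero projections via the convention $\trk(\{0\})=0$ is a small detail the paper leaves implicit.
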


\begin{proof}
By Proposition~\ref{prop:blockwise-decomposition}, it suffices to bound $\trk(\cC_i)$ for each block by the minimum of two quantities. First, choosing a basis $A_{i,1},\dots,A_{i,k_i}$ of $\cC_i$ and decomposing each $A_{i,j}$ into at most $\min(n_i,m_i)$ rank-one matrices gives $\trk(\cC_i)\le k_i\min\{n_i,m_i\}$. Second, the $n_im_i$ matrix units form a rank-one basis of $\F_q^{n_i\times m_i}\supseteq\cC_i$, giving $\trk(\cC_i)\le n_im_i$.
\end{proof}

\begin{remark}[Rank-metric specialization]
\label{rem:rank-spec}
When $t=1$, the ambient space is a single matrix space, block-simple tensors are exactly rank-one matrices, and Proposition~\ref{prop:blockwise-decomposition} reduces to $\btrk(\cC)=\trk(\cC)$ in the sense of~\cite[Definition~4.3]{BNRS19}.
\end{remark}

\begin{remark}[Hamming specialization]
\label{rem:hamming-spec}
When $n_i=m_i=1$ for all $i$, the ambient space is $\F_q^t$ with the Hamming metric, and a block-simple tensor is a scalar multiple of a coordinate vector $e_i$. The projection $\pi_i(\cC)\subseteq\F_q$ is either zero or all of $\F_q$, with $\trk(\F_q)=1$. Hence
\[
\btrk(\cC)=|\supp(\cC)|,\qquad\supp(\cC)=\{i\in[t]:\pi_i(\cC)\ne 0\}.
\]
For a nondegenerate Hamming code (full support), $\btrk(\cC)=t$.
\end{remark}

\subsection{Invariance of block tensor rank}

Although \(\btrk(\cC)\) is defined using block-simple tensors, it is an intrinsic invariant of the sum-rank metric space rather than an artefact of a
chosen coordinate representation. Indeed, bijective \(\F_q\)-linear sum-rank isometries preserve the elements of sum-rank weight one, and these are exactly the nonzero block-simple tensors. Hence they preserve block-simple spanning families and the minimum possible size of such families.

\begin{proposition}
\label{prop:btrk-invariance}
Let $\varphi\colon\cM(\bn,\bm)\to\cM(\bn',\bm')$ be a bijective $\F_q$-linear sum-rank isometry, and let $\cC':=\varphi(\cC)$. Then $\btrk(\cC)=\btrk(\cC')$.
\end{proposition}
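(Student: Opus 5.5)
The plan is to exploit the characterization in Definition~\ref{def:btrk-code}: $\btrk(\cC)$ is the least $R$ such that $\cC$ lies in the $\F_q$-span of $R$ block-simple tensors. The whole argument reduces to the observation made just before the statement: the nonzero block-simple tensors are exactly the elements of $\cM(\bn,\bm)$ of sum-rank weight one. Once this is in hand, a weight-preserving linear bijection carries block-simple spanning families of $\cC$ to block-simple spanning families of $\cC'$ of the same size, and conversely.

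\textbf{Step 1: characterize the weight-one elements.} Let $X=(X_1,\dots,X_t)$ with $\wt_{\SR}(X)=\sum_i\rk(X_i)=1$. Then exactly one block $X_i$ has rank one and all other blocks vanish. A rank-one matrix factors as $vw^\top$ with $v,w$ nonzero, so $X=\emb_i(vw^\top)$ is block-simple. Conversely, $\emb_i(vw^\top)$ with $v,w\ne 0$ has sum-rank weight exactly one. Hence $\cB(\bn,\bm)=\{X:\wt_{\SR}(X)=1\}$, and the same holds in $\cM(\bn',\bm')$.

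\textbf{Step 2: transport spanning families.} Suppose $\btrk(\cC)=R$ and $\cC\subseteq\langle B_1,\dots,B_R\rangle$ with each $B_r$ block-simple. Since $\varphi$ is linear, $\cC'=\varphi(\cC)\subseteq\langle\varphi(B_1),\dots,\varphi(B_R)\rangle$. Because $\varphi$ is a sum-rank isometry, $\wt_{\SR}(\varphi(B_r))=\dSR(\varphi(B_r),\varphi(0))=\wt_{\SR}(B_r)=1$. By Step 1 applied in $\cM(\bn',\bm')$, each $\varphi(B_r)$ is block-simple. Therefore $\btrk(\cC')\le R=\btrk(\cC)$.

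\textbf{Step 3: reverse inequality.} Since $\varphi$ is a bijective linear map, $\varphi^{-1}$ is linear. It is also an isometry, because $\wt_{\SR}(\varphi^{-1}(Y))=\wt_{\SR}(\varphi(\varphi^{-1}(Y)))=\wt_{\SR}(Y)$. Applying Step 2 to $\varphi^{-1}$ and $\cC'$ gives $\btrk(\cC)\le\btrk(\cC')$.

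The only point needing care is Step 1, and in particular the fact that "isometry" is used only through preservation of weight. Weight preservation follows from linearity, since $\wt_{\SR}(X)=\dSR(X,0)$ and $\varphi(0)=0$. The zero code is trivial on both sides. Note also that $\varphi$ need not respect block indices or block sizes: this is why the argument goes through the intrinsic weight-one characterization rather than through Proposition~\ref{prop:blockwise-decomposition}, and I do not expect any further obstacle.
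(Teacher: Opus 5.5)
Your proposal is correct and follows the paper's proof: identify the nonzero block-simple tensors with the elements of sum-rank weight one, push a minimal block-simple spanning family of $\cC$ forward through the weight-preserving linear map $\varphi$, and apply the same argument to $\varphi^{-1}$ for the reverse inequality. You also spell out two points the paper leaves implicit, namely why weight preservation follows from the isometry property (since $\varphi(0)=0$) and why $\varphi^{-1}$ is again a linear isometry.
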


\begin{proof}
A nonzero element of $\cM(\bn,\bm)$ has sum-rank weight one if and only if it is block-simple: if $B=\emb_i(vw^\top)$ with $v,w\ne 0$ then $\wt_{\SR}(B)=1$, while a tuple of sum-rank weight one must have a single nonzero block of rank one. Since $\varphi$ preserves weight, it maps block-simple tensors bijectively onto block-simple tensors. If $\cC$ lies in the span of $R$ block-simple tensors, so does $\varphi(\cC)$, and applying the same argument to $\varphi^{-1}$ gives the reverse inequality.
\end{proof}

\begin{example}
\label{ex:btrk}
Continuing Example~\ref{ex:gentuple}, consider
\[
\cC=\{(a_1I_2,\,a_2I_2):(a_1,a_2)\in\F_2^2\}
\subseteq \cM\bigl((2,2),(2,2)\bigr).
\]
The two block projections are
\(\pi_1(\cC)=\pi_2(\cC)=\langle I_2\rangle\subseteq\F_2^{2\times2}\). We claim that
\(\trk(\langle I_2\rangle)=2\). Indeed, \(\trk(\langle I_2\rangle)\ne1\),
because if \(\langle I_2\rangle\) were contained in the span of a single
rank-one matrix \(vw^\top\), then the nonzero matrix \(I_2\) would have rank at
most \(1\), a contradiction. Conversely,
\(I_2=e_1e_1^\top+e_2e_2^\top\), so
\(\langle I_2\rangle\subseteq\langle e_1e_1^\top,e_2e_2^\top\rangle\). Hence
\(\trk(\langle I_2\rangle)=2\). By Proposition~\ref{prop:blockwise-decomposition},
\[
\btrk(\cC)=\trk(\pi_1(\cC))+\trk(\pi_2(\cC))=2+2=4.
\]
On the other hand, \(\dim\cC=2\), and the nonzero codewords
\((I_2,0)\), \((0,I_2)\), and \((I_2,I_2)\) have sum-rank weights \(2\), \(2\),
and \(4\), respectively. Thus \(\dSR(\cC)=2\), so
\[
\dim\cC+\dSR(\cC)-1=2+2-1=3<4=\btrk(\cC).
\]
Therefore \(\cC\) is not BTR. The gap reflects the fact that each nonzero block
projection is generated by the rank-\(2\) matrix \(I_2\), so each projection
requires two rank-one matrices to span it; this block-internal rank cost is not
detected by the global Singleton-type lower bound.
\end{example}

\section{Coordinate-Code Bounds and Extremal Notions}
\label{sec:coordinate-bounds}

We now derive lower bounds for \(\btrk(\cC)\) from the global parameters
\((\dim\cC,\dSR(\cC))\). The lower-bound method is the sum-rank analogue of the
coordinate-code argument of~\cite{BNRS19}: a block-simple basis for \(\cC\)
induces a linear block code over \(\F_q\) whose Hamming weight controls the
sum-rank weight, and classical Hamming-metric bounds transfer back to
\(\btrk(\cC)\). This global method is complementary to the projection-wise bound
of Corollary~\ref{cor:projection-wise-bound}. The resulting extremal criteria
will be used in Section~\ref{sec:constructions} to construct BTR and
block-tensor-rank-extremal codes.

\subsection{Block-simple bases and coordinate codes}
\label{subsec:coordinate-codes}

We first associate to a block-simple spanning family a usual Hamming-metric
coordinate code. This is the bridge between block tensor rank and classical
bounds for linear block codes.

\begin{definition}
\label{def:simple-basis}
Let \(\cC\) be a nonzero \([\bn\times\bm,k]_q\) sum-rank code, and set
\(R:=\btrk(\cC)\). A \emph{block-simple basis} for \(\cC\) is a set
\(\cB=\{B_1,\dots,B_R\}\subseteq\cB(\bn,\bm)\) of \(\F_q\)-linearly independent
block-simple tensors such that \(\cC\subseteq\langle B_1,\dots,B_R\rangle\).
For each \(r\in[R]\), let \(i_r\in[t]\) be the unique block on which \(B_r\) is
supported, and for each \(i\in[t]\), set \(R_i:=\{r\in[R]:i_r=i\}\).
\end{definition}

A block-simple basis exists. Indeed, by the definition of \(R=\btrk(\cC)\),
there are \(R\) block-simple tensors whose span contains \(\cC\). Such a family
must be linearly independent: if one tensor were redundant, then removing it
would give a block-simple spanning family of size smaller than \(R\), contrary
to the minimality of \(R\).

Fix a block-simple basis \(\cB=\{B_1,\dots,B_R\}\). Since the elements of
\(\cB\) are linearly independent, every element of \(\langle\cB\rangle\) has a
unique expression \(\sum_{r=1}^{R}\lambda_rB_r\). Thus, we have a linear
isomorphism
\[
\psi_\cB:\langle\cB\rangle\xrightarrow{\sim}\F_q^R,\qquad
\sum_{r=1}^{R}\lambda_rB_r\longmapsto(\lambda_1,\dots,\lambda_R).
\]
Restricting this map to \(\cC\subseteq\langle\cB\rangle\) gives an injective
\(\F_q\)-linear map \(\psi_\cB|_\cC:\cC\hookrightarrow\F_q^R\).

\begin{definition}
\label{def:coord-block-code}
The \emph{coordinate block code} of \(\cC\) with respect to the block-simple
basis \(\cB\) is \(\cC_\cB:=\psi_\cB(\cC)\subseteq\F_q^R\). Equivalently,
\(\cC_\cB\) records the coordinates of codewords of \(\cC\) in the
block-simple basis \(\cB\).
\end{definition}

The following lemma compares the sum-rank weight of a codeword with the
Hamming weight of its coordinate vector in \(\cC_\cB\).

\begin{lemma}
\label{lem:weight-comparison}
With notation as above, write each block-simple tensor as
\(B_r=\emb_{i_r}(v_rw_r^\top)\), where \(v_r\in\F_q^{n_{i_r}}\) and
\(w_r\in\F_q^{m_{i_r}}\). If \(M\in\cC\) has expansion
\(M=\sum_{r=1}^{R}\lambda_rB_r\), then
\[
\wt_{\SR}(M)\le \wt_H(\psi_\cB(M)).
\]
\end{lemma}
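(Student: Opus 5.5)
The plan is to decompose $M$ blockwise and bound each block's rank by the number of nonzero coefficients landing in that block. Let $S:=\supp(\psi_\cB(M))=\{r\in[R]:\lambda_r\ne0\}$, so that $\wt_H(\psi_\cB(M))=|S|=\sum_{i=1}^t|S\cap R_i|$, since the sets $R_i$ partition $[R]$. The target is then the blockwise inequality $\rk(M_i)\le|S\cap R_i|$ for each $i\in[t]$, where $M_i:=\pi_i(M)$. Summing over $i$ and using Definition~\ref{def:sr-weight} then gives the claim.

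To establish the blockwise inequality, apply $\pi_i$ to the expansion $M=\sum_r\lambda_rB_r$. Since $B_r=\emb_{i_r}(v_rw_r^\top)$, we have $\pi_i(B_r)=0$ whenever $r\notin R_i$, and $\pi_i(B_r)=v_rw_r^\top$ when $r\in R_i$. Hence
\[
M_i=\sum_{r\in R_i}\lambda_r v_rw_r^\top=\sum_{r\in S\cap R_i}\lambda_r v_rw_r^\top,
\]
which expresses $M_i$ as a sum of $|S\cap R_i|$ matrices of rank at most one. Subadditivity of matrix rank then gives $\rk(M_i)\le|S\cap R_i|$.

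No step here is a genuine obstacle; the argument is a direct bookkeeping of supports. The only point requiring care is that each $B_r$ lies in exactly one block, so that the $R_i$ really partition $[R]$ and no coefficient is counted in two blocks. This follows from Definition~\ref{def:blocksimple}, since $v_r,w_r\ne0$ forces $B_r$ to be nonzero in exactly block $i_r$. Linear independence of $\cB$ is used only so that $\psi_\cB(M)$ is well defined. The inequality itself would hold for any expansion of $M$ in block-simple tensors.
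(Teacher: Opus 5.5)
Your proof is correct and is essentially the paper's argument: project the expansion onto each block, bound $\rk(M_i)$ by the number of nonzero $\lambda_r$ with $r\in R_i$ using subadditivity of rank, and sum over the partition $\{R_i\}$ of $[R]$. Your side remark is also accurate: linear independence of $\cB$ is needed only to make $\psi_\cB(M)$ well defined, and the inequality holds for any expansion of $M$ in block-simple tensors.
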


\begin{proof}
For each block \(i\in[t]\), the \(i\)-th component of \(M\) is
\[
M_i=\sum_{r\in R_i}\lambda_r v_rw_r^\top.
\]
This is a sum of at most \(\#\{r\in R_i:\lambda_r\ne0\}\) rank-one matrices.
By subadditivity of rank,
\[
\rk(M_i)\le \#\{r\in R_i:\lambda_r\ne0\}.
\]
Since the sets \(R_i\) partition \([R]\), summing over all blocks gives
\[
\wt_{\SR}(M)
=
\sum_{i=1}^{t}\rk(M_i)
\le
\sum_{i=1}^{t}\#\{r\in R_i:\lambda_r\ne0\}
=
\#\{r\in[R]:\lambda_r\ne0\}.
\]
But \(\psi_\cB(M)=(\lambda_1,\dots,\lambda_R)\), so the last quantity is
\(\wt_H(\psi_\cB(M))\). This proves the lemma.
\end{proof}

Recall that \(N_q(k,d)\) denotes the minimum length of a linear
\([N,k,\ge d]_q\) block code. By the classical Singleton bound,
\(N_q(k,d)\ge k+d-1\), with equality precisely when an MDS
\([k+d-1,k,d]_q\) block code exists over \(\F_q\).

\subsection{Coordinate-code bounds and extremal codes}
\label{subsec:global-bounds-extremal}

We now apply the coordinate block code to obtain global lower bounds on \(\btrk(\cC)\). The fundamental estimate is the coordinate-code bound,
which compares \(\btrk(\cC)\) with the minimum possible length of a classical linear block code having the same dimension and minimum distance.
The Singleton and Griesmer bounds are then obtained as immediate consequences of the corresponding classical bounds in the Hamming metric. These bounds lead to two extremal classes: one at the \(N_q(k,d)\)-level and one at the Singleton level \(k+d-1\).

\begin{theorem}[Coordinate-code bound]
\label{thm:main-bound}
Let \(\cC\) be a nonzero \([\bn\times\bm,k,d]_q\) sum-rank code. Then
\[
    \btrk(\cC)\ge N_q(k,d).
\]
\end{theorem}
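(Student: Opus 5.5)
The plan is to build the coordinate block code and show that it is a linear Hamming-metric code of length \(R=\btrk(\cC)\), dimension \(k\), and minimum distance at least \(d\). The definition of \(N_q(k,d)\) as a minimum length then gives the bound directly.

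First, set \(R:=\btrk(\cC)\) and fix a block-simple basis \(\cB=\{B_1,\dots,B_R\}\) as in Definition~\ref{def:simple-basis}. Such a basis exists by the minimality argument given right after that definition: a spanning family of minimum size \(R\) is automatically linearly independent. Next, form the coordinate block code \(\cC_\cB=\psi_\cB(\cC)\subseteq\F_q^R\) of Definition~\ref{def:coord-block-code}. Since \(\psi_\cB\) is a linear isomorphism on \(\langle\cB\rangle\) and \(\cC\subseteq\langle\cB\rangle\), the restriction \(\psi_\cB|_\cC\) is injective and linear. Hence \(\cC_\cB\) is a linear code of length \(R\) with \(\dim\cC_\cB=k\).

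The key step is the minimum distance. Let \(c\in\cC_\cB\) be nonzero and write \(c=\psi_\cB(M)\) with \(M\in\cC\). Injectivity gives \(M\ne0\). Lemma~\ref{lem:weight-comparison} then yields
\[
\wt_H(c)\ge\wt_{\SR}(M)\ge\dSR(\cC)=d.
\]
So \(\cC_\cB\) is an \([R,k,\ge d]_q\) linear code, and by the definition of \(N_q(k,d)\) we conclude \(R\ge N_q(k,d)\).

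I expect no serious obstacle, since the weight-comparison lemma carries the real content. The one point needing care is that \(N_q(k,d)\) is defined via codes of minimum distance \(\ge d\) rather than exactly \(d\), which is precisely what the argument delivers. Linear independence of \(\cB\) also matters, because it is what makes \(\psi_\cB\) well defined. Even without it, one could use any spanning family of size \(R\) and a lift of the coordinates, but the basis route is cleaner.
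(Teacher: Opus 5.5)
Your proposal is correct and follows the paper's proof essentially verbatim. You take a block-simple basis of size \(R=\btrk(\cC)\), use Lemma~\ref{lem:weight-comparison} to see that the coordinate block code \(\cC_\cB=\psi_\cB(\cC)\) is a linear \([R,k,\ge d]_q\) code, and conclude \(R\ge N_q(k,d)\) from the definition of \(N_q(k,d)\); your explicit note that injectivity of \(\psi_\cB|_\cC\) gives \(\dim\cC_\cB=k\) fills in a detail the paper leaves implicit.
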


\begin{proof}
Set \(R:=\btrk(\cC)\), and let \(\cB\) be a block-simple basis of size \(R\). By Lemma~\ref{lem:weight-comparison}, every nonzero \(M\in\cC\) satisfies
\[
    d\le \wt_{\SR}(M)\le \wt_H(\psi_\cB(M)).
\]
Thus, \(\cC_\cB\subseteq\F_q^R\) is a linear \([R,k,\ge d]_q\) block code. By the definition of \(N_q(k,d)\), we get
\[
    R\ge N_q(k,d).
\]
Since \(R=\btrk(\cC)\), the theorem follows.
\end{proof}

\begin{corollary}[Singleton coordinate-code bound]
\label{cor:basic-btrk-bound}
For every nonzero sum-rank code \(\cC\),
\[
    \btrk(\cC)\ge \dim(\cC)+\dSR(\cC)-1.
\]
Equivalently, if \(\cC\) is a nonzero \([\bn\times\bm,k,d]_q\) sum-rank code,
then
\[
    \btrk(\cC)\ge k+d-1.
\]
\end{corollary}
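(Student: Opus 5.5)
The plan is to read this off directly from Theorem~\ref{thm:main-bound} together with the classical Singleton bound in the Hamming metric. Let \(\cC\) be a nonzero \([\bn\times\bm,k,d]_q\) sum-rank code. Theorem~\ref{thm:main-bound} gives \(\btrk(\cC)\ge N_q(k,d)\). It therefore suffices to show \(N_q(k,d)\ge k+d-1\). This is the classical Singleton bound recalled in the preliminaries: any linear \([N,k,\ge d]_q\) code satisfies \(N\ge k+d-1\).

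For completeness I would include the one-line justification of the Singleton bound, in case \(N_q(k,d)\) is not obviously attained. Take a linear \([N,k,\ge d]_q\) code \(D\) and puncture the last \(d-1\) coordinates. The puncturing map is injective on \(D\), since a nonzero codeword in its kernel would have weight at most \(d-1\). Hence \(k\le N-(d-1)\).

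One point needs a check: \(N_q(k,d)\) must be well defined, meaning some \([N,k,\ge d]_q\) code exists. This is automatic here, because the coordinate block code \(\cC_\cB\) from the proof of Theorem~\ref{thm:main-bound} is such a code. Since \(\dim\cC=k\) and \(\dSR(\cC)=d\), chaining the two inequalities gives \(\btrk(\cC)\ge N_q(k,d)\ge k+d-1\), which is the claim in both of its stated forms.

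No step is a real obstacle. The only care needed is to note that the existence of \(\cC_\cB\) guarantees the minimum defining \(N_q(k,d)\) is taken over a nonempty set.
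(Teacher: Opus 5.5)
Your proposal is correct and is essentially the paper's proof: combine Theorem~\ref{thm:main-bound}, \(\btrk(\cC)\ge N_q(k,d)\), with the classical Singleton bound \(N_q(k,d)\ge k+d-1\). The puncturing justification of the Singleton bound and your remark that \(N_q(k,d)\) is well defined are valid, though the paper does not spell them out.
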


\begin{proof}
By Theorem~\ref{thm:main-bound},
\[
    \btrk(\cC)\ge N_q(k,d).
\]
The classical Singleton bound for linear block codes gives
\[
    N_q(k,d)\ge k+d-1.
\]
Therefore
\[
    \btrk(\cC)\ge k+d-1.
\]
Taking \(k=\dim(\cC)\) and \(d=\dSR(\cC)\) gives the stated form.
\end{proof}

We next refine the Singleton coordinate-code bound using the Griesmer bound for linear Hamming-metric codes~\cite{Griesmer1960}. For positive integers
\(k\) and \(d\), define
\[
    \Gri(k,d):=
    \sum_{j=0}^{k-1}
    \left\lceil\frac{d}{q^j}\right\rceil.
\]

\begin{corollary}[Griesmer coordinate-code bound]
\label{cor:griesmer-btrk-bound}
For every nonzero \([\bn\times\bm,k,d]_q\) sum-rank code \(\cC\),
\[
    \btrk(\cC)\ge \Gri(k,d)\ge k+d-1.
\]
\end{corollary}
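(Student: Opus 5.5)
The plan is to combine the coordinate-code bound of Theorem~\ref{thm:main-bound} with the classical Griesmer bound, and then to compare the Griesmer quantity with the Singleton quantity term by term.

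For the first inequality, Theorem~\ref{thm:main-bound} gives \(\btrk(\cC)\ge N_q(k,d)\). The classical Griesmer bound~\cite{Griesmer1960}, recalled in Section~\ref{sec:preliminaries}, states that every linear \([N,k,d']_q\) code satisfies \(N\ge \Gri(k,d')\). We need it for codes with minimum distance at least \(d\), since \(N_q(k,d)\) is defined via \([N,k,\ge d]_q\) codes. Because \(\Gri(k,\cdot)\) is nondecreasing in its second argument (each ceiling \(\lceil d/q^j\rceil\) is), any \([N,k,d']_q\) code with \(d'\ge d\) satisfies
\[
N\ge\Gri(k,d')\ge\Gri(k,d).
\]
Hence \(N_q(k,d)\ge\Gri(k,d)\). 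One small point to check is that \(\cC_\cB\) from the proof of Theorem~\ref{thm:main-bound} really has dimension \(k\). This holds because \(\psi_\cB|_\cC\) is injective, and that is already built into Theorem~\ref{thm:main-bound}.

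For the second inequality, separate the \(j=0\) term from the rest:
\[
\Gri(k,d)=\lceil d\rceil+\sum_{j=1}^{k-1}\lceil d/q^j\rceil .
\]
The first term is \(d\). Since \(d\ge1\) is a positive integer, each of the remaining \(k-1\) terms satisfies \(\lceil d/q^j\rceil\ge1\). Therefore
\[
\Gri(k,d)\ge d+(k-1)=k+d-1.
\]
This also recovers Corollary~\ref{cor:basic-btrk-bound} as a consequence.

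There is no real obstacle here. The only point needing care is the monotonicity remark, which lets the Griesmer bound for exact minimum distance be applied to codes with minimum distance \(\ge d\). The rest is a direct chaining of Theorem~\ref{thm:main-bound} with the classical Griesmer bound, followed by the elementary lower bound \(\lceil d/q^j\rceil\ge1\).
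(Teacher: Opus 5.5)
Your proposal is correct and follows the paper's proof exactly. You chain the coordinate-code bound of Theorem~\ref{thm:main-bound} with the Griesmer bound to get \(\btrk(\cC)\ge N_q(k,d)\ge\Gri(k,d)\). You then use that the \(j=0\) term equals \(d\) and each of the remaining \(k-1\) terms is at least \(1\). Your monotonicity remark, which justifies applying the Griesmer bound to codes of minimum distance \(\ge d\), is a small point the paper leaves implicit.
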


\begin{proof}
Theorem~\ref{thm:main-bound} gives
\[
    \btrk(\cC)\ge N_q(k,d).
\]
By the Griesmer bound for linear block codes,
\[
    N_q(k,d)\ge \Gri(k,d),
\]
and hence
\[
    \btrk(\cC)\ge \Gri(k,d).
\]
Finally, the \(j=0\) term in \(\Gri(k,d)\) is \(d\), while each of the remaining \(k-1\) terms is at least \(1\). Therefore
\[
    \Gri(k,d)\ge d+(k-1)=k+d-1.
\]
\end{proof}

The Griesmer refinement is strict whenever \(k\ge 2\) and \(d>q\), since then
\[
    \left\lceil \frac{d}{q}\right\rceil\ge 2.
\]
For example, if \(q=2\), \(k=3\), and \(d=4\), then
\[
    k+d-1=6,
    \qquad
    \Gri(3,4)=4+2+1=7.
\]

The global lower bound and the projection-wise lower bound capture different aspects of the code. Theorem~\ref{thm:main-bound} depends only on the global
parameters \(k\) and \(\dSR(\cC)\), while Corollary~\ref{cor:projection-wise-bound} detects rank-metric structure inside
the individual projections. Thus, one may combine them as
\[
    \btrk(\cC)\ge
    \max\Biggl\{
        N_q(k,\dSR(\cC)),
        \sum_{i:\pi_i(\cC)\ne0}
        N_q\bigl(
            \dim\pi_i(\cC),
            d_{\rm rk}(\pi_i(\cC))
        \bigr)
    \Biggr\}.
\]

Combining Corollary~\ref{cor:basic-btrk-bound} with the elementary upper bound of Proposition~\ref{prop:elementary-upper-btrk} also gives a useful distance estimate. If \(k_i:=\dim\pi_i(\cC)\), then
\[
    d
    \le
    \btrk(\cC)-k+1
    \le
    \sum_{i=1}^{t}
    \min\{n_im_i,\ k_i\min(n_i,m_i)\}
    -k+1.
\]
In particular, for \(q=2\) and all blocks of size \(2\times2\), this yields
\[
    d\le 4t-k+1.
\]

\begin{definition}
\label{def:btr}
A nonzero \([\bn\times\bm,k,d]_q\) sum-rank code \(\cC\) is called
\emph{block-tensor-rank-extremal} if
\[
    \btrk(\cC)=N_q(k,d).
\]
It is called \emph{block tensor rank minimum}, or \emph{BTR}, if
\[
    \btrk(\cC)=k+d-1.
\]
\end{definition}

Since \(N_q(k,d)\ge k+d-1\), every BTR code is block-tensor-rank-extremal
when
\[
    N_q(k,d)=k+d-1,
\]
equivalently when an MDS \([k+d-1,k,d]_q\) block code exists over \(\F_q\). In general, block-tensor-rank-extremality is the more flexible condition,
while BTR is the stronger Singleton-level optimality condition.

\begin{proposition}
\label{prop:btr-mds}
If \(\cC\) is a BTR \([\bn\times\bm,k,d]_q\) code and \(\cB\) is a
block-simple basis for \(\cC\), then \(\cC_\cB\) is an MDS
\([k+d-1,k,d]_q\) block code.
\end{proposition}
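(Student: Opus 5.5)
Since \(\cC\) is BTR, we have \(R:=\btrk(\cC)=k+d-1\), and by Definition~\ref{def:simple-basis} the block-simple basis \(\cB\) has exactly \(R\) elements. Hence \(\cC_\cB=\psi_\cB(\cC)\) sits inside \(\F_q^{k+d-1}\). The plan is to verify the three parameters of \(\cC_\cB\) in turn, namely length, dimension and minimum distance, and then invoke the classical Singleton bound to pin the distance down exactly.

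\emph{Dimension.} The map \(\psi_\cB|_\cC\) is an injective \(\F_q\)-linear map, because the elements of \(\cB\) are linearly independent. Therefore \(\dim\cC_\cB=\dim\cC=k\).

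\emph{Lower bound on the distance.} Let \(c\in\cC_\cB\) be nonzero, and write \(c=\psi_\cB(M)\) with \(0\ne M\in\cC\). Lemma~\ref{lem:weight-comparison} gives \(\wt_H(c)\ge\wt_{\SR}(M)\ge d\). Hence \(d_H(\cC_\cB)\ge d\), exactly as in the proof of Theorem~\ref{thm:main-bound}.

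\emph{Upper bound on the distance.} The classical Singleton bound for the linear \([k+d-1,k]_q\) code \(\cC_\cB\) gives \(d_H(\cC_\cB)\le (k+d-1)-k+1=d\). Combining the two bounds, \(d_H(\cC_\cB)=d\), so \(\cC_\cB\) is a \([k+d-1,k,d]_q\) code meeting the Singleton bound with equality, that is, an MDS code.

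The only point requiring care is that the length of \(\cC_\cB\) equals \(\btrk(\cC)\) and not something larger. This holds by definition, since a block-simple basis has exactly \(R=\btrk(\cC)\) elements. No genuine obstacle is expected.
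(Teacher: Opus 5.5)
Your proposal is correct and follows essentially the same argument as the paper: Lemma~\ref{lem:weight-comparison} makes \(\cC_\cB\) a \([k+d-1,k,\ge d]_q\) code, and the classical Singleton bound then forces the distance to equal \(d\). You also spell out the dimension step via injectivity of \(\psi_\cB|_\cC\), which the paper leaves implicit.
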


\begin{proof}
Let \(R:=\btrk(\cC)\). Since \(\cC\) is BTR, we have
\[
    R=k+d-1.
\]
By Lemma~\ref{lem:weight-comparison}, the coordinate block code \(\cC_\cB\subseteq\F_q^R\) is a linear \([R,k,\ge d]_q\) block code.
The Singleton bound gives
\[
    d_H(\cC_\cB)
    \le
    R-k+1
    =
    (k+d-1)-k+1
    =
    d.
\]
Since \(d_H(\cC_\cB)\ge d\), we have
\[
    d_H(\cC_\cB)=d.
\]
Thus, \(\cC_\cB\) is an MDS \([k+d-1,k,d]_q\) block code.
\end{proof}

\begin{corollary}
\label{cor:btr-implies-mds}
If a BTR \([\bn\times\bm,k,d]_q\) sum-rank code exists, then an MDS \([k+d-1,k,d]_q\) linear block code over \(\F_q\) exists.
\end{corollary}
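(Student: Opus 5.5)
This follows directly from Proposition~\ref{prop:btr-mds}. Let \(\cC\) be a BTR \([\bn\times\bm,k,d]_q\) sum-rank code. It is nonzero by definition, so \(R:=\btrk(\cC)\) is well defined. By Definition~\ref{def:btr}, \(R=k+d-1\).

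The first step is to produce a block-simple basis \(\cB\) for \(\cC\). By Definition~\ref{def:btrk-code} there exist \(R\) block-simple tensors whose \(\F_q\)-span contains \(\cC\). Such a family is automatically linearly independent: if one member were redundant, deleting it would give a spanning family of size \(R-1\), contradicting the minimality of \(R\). So the family is a block-simple basis in the sense of Definition~\ref{def:simple-basis}, as already noted after that definition.

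Next, Proposition~\ref{prop:btr-mds} applied to this \(\cB\) shows that \(\cC_\cB\subseteq\F_q^{R}\) is an MDS \([k+d-1,k,d]_q\) block code. Its dimension is \(k\) because \(\psi_\cB|_\cC\) is injective.

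It remains only to note that \(\cC_\cB\) is the image of an \(\F_q\)-linear map and hence a linear block code over \(\F_q\), which is exactly the required object. There is no real obstacle. The only points to check are that \(\cB\) exists and that injectivity preserves the dimension, and both are recorded earlier in the paper.
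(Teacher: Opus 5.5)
Your proof is correct and matches the paper, which simply invokes Proposition~\ref{prop:btr-mds}. Your extra checks, namely existence of a block-simple basis by minimality of \(R\) and \(\dim\cC_\cB=k\) from injectivity of \(\psi_\cB|_\cC\), only spell out what the paper already records around Definition~\ref{def:simple-basis}.
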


\begin{proof}
This follows immediately from Proposition~\ref{prop:btr-mds}.
\end{proof}

\section{Constructions of Block-Tensor-Rank-Extremal Codes}
\label{sec:constructions}

\subsection{The block-diagonal lifting map}
\label{subsec:block-diagonal-lifting}

We now introduce the lifting mechanism used in all subsequent constructions.
Starting from a classical linear block code over \(\F_q\), we divide its
coordinates into groups, one for each sum-rank block. Each coordinate group is
then converted into a matrix block by a bilinear diagonal construction. The
precise map and its basic properties are given below.

\begin{definition}
\label{def:phi}
Let \(\br=(r_1,\dots,r_t)\) be a composition of \(R\ge1\), so \(r_i\ge0\) and
\(\sum_i r_i=R\). For each \(i\in[t]\), let
\(V_i\in\F_q^{n_i\times r_i}\) and \(W_i\in\F_q^{m_i\times r_i}\), and write
\(\bV=(V_1,\dots,V_t)\), \(\bW=(W_1,\dots,W_t)\). For
\(c=(c^{(1)},\dots,c^{(t)})\in\F_q^R\), with \(c^{(i)}\in\F_q^{r_i}\), define
the \emph{block-diagonal lifting map}
\[
\phi^\br_{\bV,\bW}:\F_q^R\to\cM(\bn,\bm),\qquad
c\longmapsto (V_i\diag(c^{(i)})W_i^\top)_{i=1}^{t}.
\]
If \(r_i=0\), the \(i\)-th block is understood to be the zero matrix.
\end{definition}

In this construction, we call the block code \(C\subseteq\F_q^R\) the \emph{backbone code}, and the image
\(\cC=\phi^\br_{\bV,\bW}(C)\) the \emph{lifted sum-rank code}.

\begin{lemma}
\label{lem:phi-props}
The map \(\phi^\br_{\bV,\bW}\) is \(\F_q\)-linear, and for every \(c\in\F_q^R\), $\wt_{\SR}(\phi^\br_{\bV,\bW}(c))\le \wt_H(c).$
\end{lemma}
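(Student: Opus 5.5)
Linearity is immediate from the definition, and the weight inequality comes from writing each block as an explicit sum of rank-one matrices and using subadditivity of rank, as in Lemma~\ref{lem:weight-comparison}.

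\emph{Linearity.} For each block \(i\), the map \(c^{(i)}\mapsto\diag(c^{(i)})\) is \(\F_q\)-linear from \(\F_q^{r_i}\) to \(\F_q^{r_i\times r_i}\). Left multiplication by \(V_i\) and right multiplication by \(W_i^\top\) are also linear. Their composition \(c^{(i)}\mapsto V_i\diag(c^{(i)})W_i^\top\) is therefore linear, and so is the coordinate projection \(c\mapsto c^{(i)}\). A map into the direct sum \(\cM(\bn,\bm)\) is linear exactly when each component is, so \(\phi^\br_{\bV,\bW}\) is linear. The blocks with \(r_i=0\) are the zero map and are trivially linear.

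\emph{Weight bound.} Let \(v_{i,s}\) and \(w_{i,s}\) be the \(s\)-th columns of \(V_i\) and \(W_i\). Expanding the diagonal matrix gives
\[
V_i\diag(c^{(i)})W_i^\top=\sum_{s=1}^{r_i}c^{(i)}_s\,v_{i,s}w_{i,s}^\top .
\]
Each nonzero summand has rank at most one; it may be zero if a column of \(V_i\) or \(W_i\) vanishes. By subadditivity of rank,
\[
\rk\bigl(V_i\diag(c^{(i)})W_i^\top\bigr)\le\#\{s:c^{(i)}_s\ne0\}=\wt_H(c^{(i)}).
\]
Summing over \(i\in[t]\), and using that the coordinate groups partition \([R]\), gives
\[
\wt_{\SR}(\phi^\br_{\bV,\bW}(c))\le\sum_i\wt_H(c^{(i)})=\wt_H(c).
\]

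I do not expect any real obstacle here. The only points needing care are that the \(r_i=0\) blocks contribute \(0\) on both sides, and that we use only an inequality for rank-one terms (rank \(\le1\)), so no nondegeneracy assumption on \(V_i\) or \(W_i\) is needed.
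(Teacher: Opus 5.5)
Your proof is correct and matches the paper's argument: expand each block as $\sum_s c^{(i)}_s v_{i,s}w_{i,s}^\top$, bound its rank by $\wt_H(c^{(i)})$ using subadditivity, and sum over the blocks, which partition $[R]$. Your remark that a summand may have rank zero when a column of $V_i$ or $W_i$ vanishes is slightly more careful than the paper's phrase ``a sum of $\wt_H(c^{(i)})$ rank-one matrices.''
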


\begin{proof}
Linearity is immediate. Write the columns of \(V_i\) and \(W_i\) as
\(v_{i,1},\dots,v_{i,r_i}\) and \(w_{i,1},\dots,w_{i,r_i}\). Then
\[
V_i\diag(c^{(i)})W_i^\top
=
\sum_{r:c^{(i)}_r\ne0}c^{(i)}_r\,v_{i,r}w_{i,r}^\top,
\]
which is a sum of \(\wt_H(c^{(i)})\) rank-one matrices. Hence
\[
\rk(V_i\diag(c^{(i)})W_i^\top)\le \wt_H(c^{(i)}).
\]
Summing over \(i\) gives
\[
\wt_{\SR}(\phi^\br_{\bV,\bW}(c))
\le
\sum_{i=1}^{t}\wt_H(c^{(i)})
=
\wt_H(c).
\]
\end{proof}

\begin{proposition}[General lifting]
\label{prop:phi-construction}
Let \(C\subseteq\F_q^R\) be a linear \([R,k,d]_q\) block code and let
\((\br,\bV,\bW)\) be lifting data as above. Suppose
\(\phi^\br_{\bV,\bW}|_C\) is injective, and set
\(\cC:=\phi^\br_{\bV,\bW}(C)\subseteq\cM(\bn,\bm)\). Then
\(\dim\cC=k\) and \(\btrk(\cC)\le R\). If every nonzero \(c\in C\) satisfies
\begin{equation}
\label{eq:distance-lift}
\sum_{i=1}^{t}
\rk\bigl(V_i\diag(c^{(i)})W_i^\top\bigr)\ge d,
\end{equation}
then \(\dSR(\cC)\ge d\). Furthermore, if \(R=k+d-1\) and
\eqref{eq:distance-lift} holds, then \(\cC\) is a BTR
\([\bn\times\bm,k,d]_q\) code.
\end{proposition}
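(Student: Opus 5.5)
The plan is to verify the three assertions of Proposition~\ref{prop:phi-construction} in turn, using only linearity of \(\phi:=\phi^\br_{\bV,\bW}\), the structure of each lifted block as a sum of rank-one matrices, and the Singleton coordinate-code bound.

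\textbf{Dimension.} Since \(\phi\) is \(\F_q\)-linear by Lemma~\ref{lem:phi-props} and \(\phi|_C\) is injective by hypothesis, \(\cC=\phi(C)\) is a linear subspace isomorphic to \(C\). Hence \(\dim\cC=k\).

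\textbf{Upper bound on \(\btrk\).} Write \(v_{i,s}\) and \(w_{i,s}\) for the columns of \(V_i\) and \(W_i\). As in the proof of Lemma~\ref{lem:phi-props}, for every \(c\in C\),
\[
\phi(c)=\sum_{i=1}^{t}\sum_{s=1}^{r_i}c^{(i)}_s\,\emb_i\bigl(v_{i,s}w_{i,s}^\top\bigr).
\]
Let \(S\) be the set of pairs \((i,s)\) with \(v_{i,s}\ne0\) and \(w_{i,s}\ne0\). For \((i,s)\notin S\) the summand vanishes, and for \((i,s)\in S\) the tensor \(\emb_i(v_{i,s}w_{i,s}^\top)\) is block-simple by Definition~\ref{def:blocksimple}. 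So \(\cC\) lies in the span of at most \(|S|\le\sum_i r_i=R\) block-simple tensors, and Definition~\ref{def:btrk-code} gives \(\btrk(\cC)\le R\). The only care needed is the zero-column case, since block-simple tensors require nonzero \(v\) and \(w\); dropping those columns only lowers the count.

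\textbf{Distance.} Let \(X\in\cC\) be nonzero. Then \(X=\phi(c)\) for some \(c\in C\), and \(c\ne0\) because \(\phi\) is linear. Its sum-rank weight is exactly the left-hand side of \eqref{eq:distance-lift}, so \(\wt_{\SR}(X)\ge d\). Therefore \(\dSR(\cC)\ge d\).

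\textbf{BTR conclusion.} Suppose \(R=k+d-1\) and \eqref{eq:distance-lift} holds, and set \(d':=\dSR(\cC)\ge d\). Corollary~\ref{cor:basic-btrk-bound} combined with the upper bound above gives
\[
k+d'-1\le\btrk(\cC)\le R=k+d-1.
\]
Hence \(d'\le d\), so \(d'=d\). All these inequalities are then equalities, giving \(\btrk(\cC)=k+d-1\). So \(\cC\) is a BTR \([\bn\times\bm,k,d]_q\) code.

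The main obstacle is this last step: it needs the exact value \(\dSR(\cC)=d\), not just the lower bound. The argument gets it without computing the distance directly, by squeezing \(d'\) between the Singleton coordinate-code bound and the \(R\)-term spanning family.
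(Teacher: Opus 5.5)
Your proof is correct and follows the paper's argument step for step: linearity plus injectivity give the dimension, the explicit rank-one expansion of $\phi(c)$ gives $\btrk(\cC)\le R$, and the Singleton coordinate-code bound gives the matching lower bound. There are two small variations. You obtain $\dSR(\cC)=d$ by the squeeze $k+\dSR(\cC)-1\le\btrk(\cC)\le k+d-1$, whereas the paper picks a weight-$d$ codeword $c\in C$ and applies Lemma~\ref{lem:phi-props} to get $\wt_{\SR}(\phi(c))\le d$. You also explicitly discard zero columns of $V_i,W_i$ so that the spanning tensors are genuinely block-simple, a detail the paper passes over.
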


\begin{proof}
For each \(i\in[t]\) and \(r\in[r_i]\), the tensor
\(\emb_i(v_{i,r}w_{i,r}^\top)\) is block-simple. There are
\(\sum_i r_i=R\) such tensors, and for any \(c\in C\),
\[
\phi^\br_{\bV,\bW}(c)
=
\sum_{i=1}^{t}\emb_i\bigl(V_i\diag(c^{(i)})W_i^\top\bigr)
=
\sum_{i=1}^{t}\sum_{r=1}^{r_i}
c^{(i)}_r\,\emb_i(v_{i,r}w_{i,r}^\top).
\]
Thus, \(\cC\) is contained in the span of \(R\) block-simple tensors, so
\(\btrk(\cC)\le R\). Since \(\phi^\br_{\bV,\bW}|_C\) is injective and
\(\dim C=k\), we have \(\dim\cC=k\). If \eqref{eq:distance-lift} holds for
every nonzero \(c\in C\), then every nonzero lifted codeword has sum-rank
weight at least \(d\), and hence \(\dSR(\cC)\ge d\).

Assume now that \(R=k+d-1\). By Corollary~\ref{cor:basic-btrk-bound},
\[
\btrk(\cC)\ge k+d-1=R.
\]
Together with \(\btrk(\cC)\le R\), this gives \(\btrk(\cC)=R\). Since \(C\) has
minimum Hamming distance \(d\), there exists \(c\in C\) with \(\wt_H(c)=d\).
By Lemma~\ref{lem:phi-props},
\[
\wt_{\SR}(\phi^\br_{\bV,\bW}(c))\le d.
\]
Together with \(\dSR(\cC)\ge d\), this gives \(\dSR(\cC)=d\). Therefore
\(\cC\) is BTR.
\end{proof}

\begin{definition}
\label{def:extremal-tuple}
An \emph{extremal tuple} is a quadruple \((C,\br,\bV,\bW)\), where
\(C\subseteq\F_q^R\) is an MDS \([R,k,d]_q\) block code and the lifting data
\((\br,\bV,\bW)\) satisfy the hypotheses of
Proposition~\ref{prop:phi-construction} with \(R=k+d-1\). The induced
sum-rank code
\[
\cC:=\phi^\br_{\bV,\bW}(C)\subseteq\cM(\bn,\bm)
\]
is then BTR. We call \(\cC\) an \emph{extremal BTR code}.
\end{definition}

\subsection{Three explicit regimes}
\label{subsec:three-regimes}

We now describe three parameter regimes in which extremal tuples exist. Throughout this
subsection, \(R:=k+d-1\). The first regime is the simplest one: all \(R\)
coordinates of the MDS backbone are placed inside a single sufficiently large matrix block.

\begin{proposition}[Fat-block regime]
\label{prop:fat-block}
Let \(R=k+d-1\), and suppose there exists \(i_0\in[t]\) such that
\(\min(n_{i_0},m_{i_0})\ge R\). Then, for every MDS \([R,k,d]_q\) block code
\(C\), an extremal tuple \((C,\br,\bV,\bW)\) exists. Consequently, if an MDS
\([R,k,d]_q\) block code over \(\F_q\) exists, then a BTR
\([\bn\times\bm,k,d]_q\) sum-rank code exists.
\end{proposition}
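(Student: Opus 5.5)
We will build the extremal tuple by hand, putting every coordinate of the backbone into the large block $i_0$. Choose the composition $\br$ with $r_{i_0}=R$ and $r_i=0$ for $i\ne i_0$. For the lifting matrices take
\[
V_{i_0}=\begin{pmatrix}I_R\\0\end{pmatrix}\in\F_q^{n_{i_0}\times R},\qquad W_{i_0}=\begin{pmatrix}I_R\\0\end{pmatrix}\in\F_q^{m_{i_0}\times R}.
\]
Both are well defined because $\min(n_{i_0},m_{i_0})\ge R$. All other $V_i,W_i$ are empty, so the other blocks are zero. With this choice, $\phi^\br_{\bV,\bW}(c)$ has block $i_0$ equal to $\diag(c)$ padded by zeros to size $n_{i_0}\times m_{i_0}$, and every other block is zero.

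Next we check the hypotheses of Proposition~\ref{prop:phi-construction}.

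\textbf{Injectivity.} The $(r,r)$ entry of block $i_0$ is $c_r$, so $c$ can be read off its image. Hence $\phi^\br_{\bV,\bW}$ is injective on all of $\F_q^R$, and in particular on $C$.

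\textbf{Distance condition \eqref{eq:distance-lift}.} For every nonzero $c\in C$,
\[
\sum_i\rk\bigl(V_i\diag(c^{(i)})W_i^\top\bigr)=\rk(\diag(c))=\wt_H(c)\ge d.
\]
This holds because the rank of a diagonal matrix equals its number of nonzero diagonal entries, and $C$ has minimum distance $d$.

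Since $R=k+d-1$ and $C$ is MDS, Proposition~\ref{prop:phi-construction} shows that $\cC=\phi^\br_{\bV,\bW}(C)$ is a BTR $[\bn\times\bm,k,d]_q$ code. Thus $(C,\br,\bV,\bW)$ is an extremal tuple in the sense of Definition~\ref{def:extremal-tuple}.

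The consequence is then immediate: apply the first part to any MDS $[R,k,d]_q$ code that exists. There is no real obstacle here. The only point needing care is that the rank identity $\rk(\diag(c))=\wt_H(c)$ survives the zero padding, and this is clear because padding with zero rows and columns does not change rank.
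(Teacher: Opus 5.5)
Your proof is correct and follows the paper's argument. Both put all $R$ backbone coordinates into block $i_0$, check injectivity and the distance condition via $\rk(\diag(c))=\wt_H(c)$, and invoke Proposition~\ref{prop:phi-construction}; the only difference is that you fix the zero-padded identity as $V_{i_0},W_{i_0}$, whereas the paper allows any full-column-rank matrices and uses that multiplying by them preserves rank.
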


\begin{proof}
Set \(r_{i_0}=R\) and \(r_i=0\) for \(i\ne i_0\). Since
\(\min(n_{i_0},m_{i_0})\ge R\), choose full-column-rank matrices
\(V_{i_0}\in\F_q^{n_{i_0}\times R}\) and
\(W_{i_0}\in\F_q^{m_{i_0}\times R}\). For all other blocks, the lifting data
are empty. For \(c\in C\setminus\{0\}\), left multiplication by \(V_{i_0}\) and
right multiplication by \(W_{i_0}^\top\) preserve the rank of \(\diag(c)\), so
\[
\rk\bigl(V_{i_0}\diag(c)W_{i_0}^{\top}\bigr)
=
\rk(\diag(c))
=
\wt_H(c)
\ge d.
\]
Thus, \eqref{eq:distance-lift} holds. The same equality shows that
\(\phi^\br_{\bV,\bW}(c)=0\) implies \(c=0\), so the restricted lifting map is
injective. Proposition~\ref{prop:phi-construction} gives the claim.
\end{proof}

The next regime distributes the \(R\) coordinates of the MDS backbone among
several blocks. To ensure that enough Hamming weight is converted into
sum-rank weight, we use matrices whose columns satisfy an MDS-type independence
condition.

\begin{definition}
\label{def:n-mds}
A matrix \(V\in\F_q^{n\times r}\) is \emph{\(n\)-MDS} if every
\(\min(n,r)\) columns of \(V\) are linearly independent. Equivalently, if
\(n\le r\), then \(V\) is a generator matrix of an MDS \([r,n,r-n+1]_q\) code;
if \(n\ge r\), then \(V\) has full column rank.
\end{definition}

\begin{proposition}[MDS-backbone regime]
\label{prop:mds-backbone}
Let \(R=k+d-1\), let \(C\subseteq\F_q^R\) be an MDS \([R,k,d]_q\) block code,
and let \(\br=(r_1,\dots,r_t)\) be a composition of \(R\). For each \(i\), let
\(V_i\in\F_q^{n_i\times r_i}\) be \(n_i\)-MDS and
\(W_i\in\F_q^{m_i\times r_i}\) be \(m_i\)-MDS. Suppose every nonzero
\(c\in C\), with \(w_i:=\wt_H(c^{(i)})\), satisfies
\begin{equation}
\label{eq:mds-backbone-condition}
\sum_{i=1}^{t}
\max\bigl\{0,\ \min(n_i,w_i)+\min(m_i,w_i)-w_i\bigr\}
\ge d.
\end{equation}
Then \((C,\br,\bV,\bW)\) is an extremal tuple.
\end{proposition}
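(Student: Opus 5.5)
The plan is to reduce the statement to Proposition~\ref{prop:phi-construction}. Since \(R=k+d-1\) and \(C\) is MDS, it suffices to show that for every nonzero \(c\in C\) the inequality \eqref{eq:distance-lift} holds. Once \eqref{eq:distance-lift} holds, injectivity of \(\phi^\br_{\bV,\bW}|_C\) follows at once: a nonzero \(c\) has image of sum-rank weight at least \(d\ge1\), hence nonzero. So the whole task is a blockwise rank estimate,
\[
\rk\bigl(V_i\diag(c^{(i)})W_i^\top\bigr)\ge \max\bigl\{0,\ \min(n_i,w_i)+\min(m_i,w_i)-w_i\bigr\},
\]
after which summing over \(i\) and invoking hypothesis \eqref{eq:mds-backbone-condition} gives \eqref{eq:distance-lift}.

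To prove the blockwise estimate, fix \(i\) and let \(S\subseteq[r_i]\) be the support of \(c^{(i)}\), with \(|S|=w_i\). Then
\[
V_i\diag(c^{(i)})W_i^\top=V_S\,D\,W_S^\top,
\]
where \(V_S\in\F_q^{n_i\times w_i}\) and \(W_S\in\F_q^{m_i\times w_i}\) are the column submatrices indexed by \(S\), and \(D\) is an invertible \(w_i\times w_i\) diagonal matrix. Next apply Sylvester's rank inequality to \(V_S\cdot(DW_S^\top)\):
\[
\rk(V_SDW_S^\top)\ge \rk(V_S)+\rk(DW_S^\top)-w_i=\rk(V_S)+\rk(W_S)-w_i.
\]
The \(n_i\)-MDS property gives \(\rk(V_S)=\min(n_i,w_i)\). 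Indeed, if \(w_i\le n_i\), then any \(w_i\le\min(n_i,r_i)\) columns are independent. If \(w_i>n_i\), then \(S\) contains \(\min(n_i,r_i)=n_i\) independent columns, so \(V_S\) has rank \(n_i\). The same argument gives \(\rk(W_S)=\min(m_i,w_i)\). Combining with the trivial bound \(\rk\ge0\) yields the blockwise estimate, and the case \(w_i=0\) is immediate.

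The main point to check carefully is this MDS-rank step, in particular the edge case \(r_i<n_i\). There, \(\min(n_i,r_i)=r_i\ge w_i\), so all columns are independent and \(\rk(V_S)=w_i=\min(n_i,w_i)\). With that verified, Proposition~\ref{prop:phi-construction} applies with \(R=k+d-1\). It shows that \(\cC\) is a BTR \([\bn\times\bm,k,d]_q\) code, so \((C,\br,\bV,\bW)\) is an extremal tuple in the sense of Definition~\ref{def:extremal-tuple}.
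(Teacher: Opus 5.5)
Your proposal is correct and follows the paper's proof essentially step for step. You restrict to the support of each \(c^{(i)}\), use the \(n_i\)- and \(m_i\)-MDS properties to get ranks \(\min(n_i,w_i)\) and \(\min(m_i,w_i)\), and apply a rank inequality. Your Sylvester inequality on \(V_S\cdot(DW_S^\top)\) is the paper's Frobenius inequality with the middle factor \(\diag(c^{(i)})\) compressed to the invertible \(D\). You then sum over blocks and invoke Proposition~\ref{prop:phi-construction}, as the paper does; your explicit verification of \(\rk(V_S)=\min(n_i,w_i)\) fills in a step the paper only asserts.
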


\begin{proof}
Fix \(c\in C\setminus\{0\}\). For each \(i\), let \(S_i\subseteq[r_i]\) be the
support of \(c^{(i)}\), so \(|S_i|=w_i\). The matrix \(V_i\diag(c^{(i)})\) has
the same column space as the submatrix \((V_i)_{S_i}\). Since \(V_i\) is
\(n_i\)-MDS,
\[
\rk\bigl(V_i\diag(c^{(i)})\bigr)=\min(n_i,w_i).
\]
Similarly,
\[
\rk\bigl(\diag(c^{(i)})W_i^\top\bigr)=\min(m_i,w_i).
\]
Applying Frobenius's rank inequality to
\(A=V_i\), \(B=\diag(c^{(i)})\), and \(C=W_i^\top\), and using
\(\rk(\diag(c^{(i)}))=w_i\), gives
\[
\rk\bigl(V_i\diag(c^{(i)})W_i^\top\bigr)
\ge
\min(n_i,w_i)+\min(m_i,w_i)-w_i.
\]
Since rank is nonnegative,
\[
\rk\bigl(V_i\diag(c^{(i)})W_i^\top\bigr)
\ge
\max\bigl\{0,\min(n_i,w_i)+\min(m_i,w_i)-w_i\bigr\}.
\]
Summing over \(i\) and using \eqref{eq:mds-backbone-condition}, we obtain
\eqref{eq:distance-lift}. In particular, the lifted image of every nonzero
\(c\in C\) is nonzero, so \(\phi^\br_{\bV,\bW}|_C\) is injective.
Proposition~\ref{prop:phi-construction} completes the proof.
\end{proof}

A useful uniform consequence is obtained when all blocks have the same row and
column sizes.

\begin{corollary}[Uniform case with \(n\ge d\)]
\label{cor:uniform-ngeqd}
Assume \(n_i=n\) and \(m_i=m\) for all \(i\), with \(d\le n\le m\). Let \(C\)
be an MDS \([R,k,d]_q\) block code with \(R=k+d-1\), and let
\(\br=(r,\dots,r)\) with \(tr=R\). If \(n+m\ge r+d\), each \(V_i\) is
\(n\)-MDS, and each \(W_i\) is \(m\)-MDS, then \((C,\br,\bV,\bW)\) is an
extremal tuple.
\end{corollary}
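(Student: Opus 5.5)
The plan is to obtain the statement as a direct specialization of Proposition~\ref{prop:mds-backbone}. The lifting data already have the required form: $C$ is MDS with $R=k+d-1$, $\br=(r,\dots,r)$ is a composition of $R$ because $tr=R$, each $V_i$ is $n$-MDS and each $W_i$ is $m$-MDS. So the only thing to check is condition~\eqref{eq:mds-backbone-condition} for every nonzero $c\in C$. Once it holds, Proposition~\ref{prop:mds-backbone} says $(C,\br,\bV,\bW)$ is an extremal tuple.

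Fix a nonzero $c\in C$ and set $w_i:=\wt_H(c^{(i)})$. Then $0\le w_i\le r$ and $\sum_i w_i=\wt_H(c)\ge d$, since $C$ has minimum distance $d$. Let
\[
f(w):=\max\{0,\ \min(n,w)+\min(m,w)-w\}.
\]
Using $n\le m$, I will evaluate $f$ in three ranges:
\begin{itemize}
\item if $w\le n$, then $f(w)=w$;
\item if $n<w\le m$, then $f(w)=n$;
\item if $w>m$, then $f(w)=\max\{0,n+m-w\}$.
\end{itemize}

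The verification then splits into two cases.
\begin{enumerate}
\item \emph{Some block has $w_{i}>n$.} That block alone contributes at least $d$.
 \begin{itemize}
 \item In the middle range, $f(w_i)=n\ge d$, by the standing hypothesis $d\le n$.
 \item In the top range, $w_i\le r$ gives $f(w_i)\ge n+m-r\ge d$, by the hypothesis $n+m\ge r+d$.
 \end{itemize}
 Since all other terms are nonnegative, the sum is at least $d$.
\item \emph{Every $w_i\le n$.} Each term equals $w_i$, so the sum is $\sum_i w_i=\wt_H(c)\ge d$.
\end{enumerate}
In both cases~\eqref{eq:mds-backbone-condition} holds, and Proposition~\ref{prop:mds-backbone} finishes the proof.

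There is no serious obstacle. The one point needing care is the top range $w_i>m$. That range is nonempty only when $r>m$, and there the Frobenius-type lower bound can degrade. The hypothesis $n+m\ge r+d$ is exactly what keeps even a single heavily loaded block above $d$. If $r\le n$, only the first range can occur and the argument reduces to the trivial Case~2.
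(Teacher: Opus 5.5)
Your proposal is correct and takes the same route as the paper. You reduce to Proposition~\ref{prop:mds-backbone} and evaluate the summand in the three ranges $w_i\le n$, $n<w_i\le m$, $w_i>m$, using $d\le n$ and $n+m\ge r+d$; then either a single block contributes at least $d$, or all $w_i\le n$ and the sum equals $\wt_H(c)\ge d$.
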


\begin{proof}
We verify \eqref{eq:mds-backbone-condition}. Let \(c\in C\setminus\{0\}\),
set \(w_i:=\wt_H(c^{(i)})\), and let \(w:=\sum_iw_i=\wt_H(c)\). Since \(C\)
has minimum distance \(d\), \(w\ge d\). Define
\[
s_i:=\max\{0,\min(n,w_i)+\min(m,w_i)-w_i\}.
\]
If \(w_i\le n\), then \(s_i=w_i\). If \(n<w_i\le m\), then \(s_i=n\). If
\(w_i>m\), then \(s_i=\max\{0,n+m-w_i\}\), and since \(w_i\le r\) and
\(n+m\ge r+d\), we get \(s_i\ge n+m-r\ge d\). Thus, either some block already
contributes at least \(d\), or all blocks have \(w_i\le n\), in which case
\(\sum_i s_i=\sum_iw_i=w\ge d\). Therefore
\eqref{eq:mds-backbone-condition} holds, and Proposition~\ref{prop:mds-backbone}
applies.
\end{proof}

The preceding uniform corollary does not cover the genuinely sum-rank
\emph{pooling regime}, where the target distance is larger than the maximum
rank contribution available from a single block. More precisely, this is the
range \(\max_i n_i<d\le\sum_i n_i\). The following result treats a tractable
subregime in which each block receives only a small number of backbone
coordinates.

\begin{corollary}[Uniform pooling regime with \(r\le n\)]
\label{cor:uniform-pooling}
Assume \(n_i=n\) and \(m_i=m\) for all \(i\), with \(n<d\le tn\). Let
\(\br=(r,\dots,r)\), where \(tr=R=k+d-1\). Suppose \(r\le\min(n,m)\), and each
\(V_i\in\F_q^{n\times r}\) and \(W_i\in\F_q^{m\times r}\) has full column rank
\(r\). Then for every MDS \([R,k,d]_q\) block code \(C\), the tuple
\((C,\br,\bV,\bW)\) is extremal. In particular, whenever an MDS \([R,k,d]_q\)
block code over \(\F_q\) exists, BTR \([\bn\times\bm,k,d]_q\) codes exist in
this regime.
\end{corollary}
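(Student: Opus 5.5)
Since \(r\le\min(n,m)\) and every \(V_i\), \(W_i\) has full column rank \(r\), we reduce everything to Proposition~\ref{prop:phi-construction}. It suffices to check two things. First, the restricted lifting map \(\phi^\br_{\bV,\bW}|_C\) is injective. Second, the distance condition \eqref{eq:distance-lift} holds for every nonzero \(c\in C\). The backbone length already satisfies \(R=k+d-1\), so the proposition then delivers a BTR code, and \((C,\br,\bV,\bW)\) is an extremal tuple in the sense of Definition~\ref{def:extremal-tuple}.

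The key step is an exact rank computation in each block. Fix \(c\in C\setminus\{0\}\) and write \(c=(c^{(1)},\dots,c^{(t)})\) with \(c^{(i)}\in\F_q^r\). Because \(V_i\in\F_q^{n\times r}\) has full column rank \(r\), multiplication on the left by \(V_i\) is injective on \(\F_q^{r\times r}\) and preserves rank. Likewise, \(W_i^\top\) has full row rank \(r\), so multiplication on the right by \(W_i^\top\) preserves rank. This is the argument used in Proposition~\ref{prop:fat-block}, applied now blockwise. Hence
\[
\rk\bigl(V_i\diag(c^{(i)})W_i^\top\bigr)=\rk\bigl(\diag(c^{(i)})\bigr)=\wt_H(c^{(i)}).
\]
Summing over \(i\) gives \(\sum_i\rk(V_i\diag(c^{(i)})W_i^\top)=\wt_H(c)\ge d\), because \(C\) has minimum distance \(d\). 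This is exactly \eqref{eq:distance-lift}. The same identity shows that \(\phi^\br_{\bV,\bW}(c)=0\) forces \(\wt_H(c)=0\), so the restricted map is injective. Alternatively, one can note that \(r\le\min(n,m)\) makes \(V_i\) \(n\)-MDS and \(W_i\) \(m\)-MDS, and that each \(w_i\le r\le\min(n,m)\) makes the summand in \eqref{eq:mds-backbone-condition} equal to \(w_i\); then Proposition~\ref{prop:mds-backbone} applies directly.

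No step here is a real obstacle; the only point needing care is the bookkeeping of the hypotheses. The hypothesis \(n<d\le tn\) plays no role in the validity of the argument; it only marks the regime as genuinely pooling, meaning that no single block can carry weight \(d\) and the weight must be spread across blocks. Existence of the full-column-rank matrices \(V_i,W_i\) is immediate from \(r\le\min(n,m)\); for example, take the first \(r\) standard basis vectors as columns. The final sentence of the statement then follows by combining the construction with any MDS \([R,k,d]_q\) code. Here \(t\) divides \(R\) by the standing convention \(tr=R\).
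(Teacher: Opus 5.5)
Your proof is correct. Your main argument computes \(\rk\bigl(V_i\diag(c^{(i)})W_i^\top\bigr)=\wt_H(c^{(i)})\) exactly, as in the fat-block and Griesmer-backbone proofs, and applies Proposition~\ref{prop:phi-construction} directly; the paper instead takes the route you list as an alternative, noting that full column rank is the \(n\)-MDS/\(m\)-MDS condition when \(r\le\min(n,m)\), so each summand of \eqref{eq:mds-backbone-condition} collapses to \(w_i\) and Proposition~\ref{prop:mds-backbone} applies. Your direct route bypasses the Frobenius inequality and gives the slightly sharper identity \(\wt_{\SR}(\phi^\br_{\bV,\bW}(c))=\wt_H(c)\).
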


\begin{proof}
Since \(r\le n\) and \(r\le m\), full column rank of \(V_i\) and \(W_i\) is
equivalent to the \(n\)-MDS and \(m\)-MDS conditions needed in
Proposition~\ref{prop:mds-backbone}. Let \(c\in C\setminus\{0\}\), and set
\(w_i:=\wt_H(c^{(i)})\). Because \(w_i\le r\le\min(n,m)\),
\[
\min(n,w_i)=\min(m,w_i)=w_i,
\]
and hence the \(i\)-th summand in \eqref{eq:mds-backbone-condition} is \(w_i\).
Summing over \(i\) gives \(\sum_iw_i=\wt_H(c)\ge d\). Therefore
\eqref{eq:mds-backbone-condition} holds, and Proposition~\ref{prop:mds-backbone}
gives the result.
\end{proof}

The hypotheses of Corollary~\ref{cor:uniform-pooling} imply a small-dimension
condition. Indeed, \(n<d\), \(r\le n\), and \(tr=R=k+d-1\) imply
\[
k+d-1\le tn,\qquad\text{or equivalently}\qquad k\le tn-d+1.
\]
Compared with the MSRD Singleton-type bound \(k\le m(tn-d+1)\) from
Theorem~\ref{thm:singleton}, this restricts the dimension by a factor of \(m\).
Thus, Corollary~\ref{cor:uniform-pooling} covers BTR codes of relatively small
dimension in the pooling regime; treating larger \(k\) requires additional
support-distribution control for the MDS backbone.

\begin{example}
\label{ex:uniform-pooling}
Take \(q=5\), \(t=2\), \(n=m=2\), \(r=2\), \(k=2\), and \(d=3\). Then
\(R=tr=4=k+d-1\). Let \(C\subseteq\F_5^4\) be the Reed--Solomon \([4,2,3]_5\)
code with evaluation points \((1,2,3,4)\), and take
\(V_1=V_2=W_1=W_2=I_2\). The lifting is
\[
\mathcal{C}=\phi^\br_{\bV,\bW}(c_1,c_2,c_3,c_4)
=
(\diag(c_1,c_2),\,\diag(c_3,c_4)).
\]
For example, \(c=(0,1,2,3)\in C\) maps to
\[
(\diag(0,1),\,\diag(2,3)),
\]
which has sum-rank weight \(1+2=3=d\). Hence
\(\btrk(\mathcal{C})=k+d-1=4\).
\end{example}


\subsection{Vandermonde-lifted Reed--Solomon BTR codes}
\label{sec:btr-family}

We now provide an explicit parameterised BTR family fitting the uniform
MDS-backbone regime of Corollary~\ref{cor:uniform-ngeqd}. The construction
starts from a Reed--Solomon MDS backbone and uses Vandermonde matrices as the
lifting matrices.

\begin{definition}
\label{def:vrs}
Let \(q\) be a prime power and fix positive integers \(t,n,m,k,d,r\) with
\(R:=k+d-1=tr\), \(d\le n\le m\), \(n+m\ge r+d\), and
\(q\ge\max(R,r+1)\). Let \(C\subseteq\F_q^R\) be a Reed--Solomon
\([R,k,d]_q\) code with \(R\) distinct evaluation points. For each \(i\in[t]\),
let \(V_i\in\F_q^{n\times r}\) and \(W_i\in\F_q^{m\times r}\) be Vandermonde
matrices
\[
V_i=[u_{i,s}^{\rho-1}]_{\rho\in[n],s\in[r]},
\qquad
W_i=[v_{i,s}^{\rho-1}]_{\rho\in[m],s\in[r]},
\]
where \(u_{i,1},\dots,u_{i,r}\in\F_q^*\) and
\(v_{i,1},\dots,v_{i,r}\in\F_q^*\) are pairwise distinct. The associated
\emph{Vandermonde-lifted Reed--Solomon sum-rank code} is
\(\cC:=\phi^\br_{\bV,\bW}(C)\subseteq\cM(\bn,\bm)\), with
\(\br=(r,\dots,r)\), \(\bn=(n,\dots,n)\), and \(\bm=(m,\dots,m)\).
\end{definition}

\begin{theorem}
\label{thm:vrs-btr}
Every Vandermonde-lifted Reed--Solomon sum-rank code is a BTR
\([\bn\times\bm,k,d]_q\) sum-rank code.
\end{theorem}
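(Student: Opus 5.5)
The plan is to reduce the theorem to Corollary~\ref{cor:uniform-ngeqd}, which already produces an extremal tuple, and hence a BTR code via Definition~\ref{def:extremal-tuple} and Proposition~\ref{prop:phi-construction}. The hypotheses of Definition~\ref{def:vrs} were chosen to match that corollary. We have uniform blocks $n_i=n$, $m_i=m$ with $d\le n\le m$, the composition is $\br=(r,\dots,r)$ with $tr=R=k+d-1$, and $n+m\ge r+d$. So three things remain to be checked: that $C$ is MDS, that each $V_i$ is $n$-MDS, and that each $W_i$ is $m$-MDS.

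First, $C$ is a Reed--Solomon $[R,k,d]_q$ code with $R$ distinct evaluation points, where $R\le q$. A nonzero polynomial of degree less than $k$ has fewer than $k$ roots, so every nonzero codeword has weight at least $R-k+1=d$. Hence $C$ is MDS with these parameters.

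Second, I would verify the $n$-MDS property of $V_i=[u_{i,s}^{\rho-1}]$ (Definition~\ref{def:n-mds}). Take any $\min(n,r)$ columns, indexed by a set $S$. The submatrix formed by its top $\min(n,r)$ rows is a square Vandermonde matrix in the pairwise distinct nodes $u_{i,s}$, $s\in S$. Its determinant $\prod_{s<s'}(u_{i,s'}-u_{i,s})$ is nonzero, so the chosen columns are linearly independent. This covers both cases: if $n\ge r$ we get full column rank, and if $n<r$ every $n$ columns are independent. The same argument applied to $W_i$ with nodes $v_{i,s}$ shows $W_i$ is $m$-MDS.

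The only point needing care is that the Vandermonde matrices exist at all, i.e.\ that $r$ pairwise distinct nonzero nodes can be chosen in $\F_q^*$. This needs $r\le q-1$, which is exactly the assumption $q\ge r+1$. The nonzero requirement is not actually used by the rank argument, only distinctness. With these checks done, Corollary~\ref{cor:uniform-ngeqd} shows that $(C,\br,\bV,\bW)$ is an extremal tuple. Proposition~\ref{prop:phi-construction} then gives $\dim\cC=k$, $\dSR(\cC)=d$ and $\btrk(\cC)=k+d-1$, so $\cC$ is BTR. I expect no real obstacle here. The main task is to keep the indexing straight in the square-minor argument when $n<r$, and to state explicitly where $q\ge\max(R,r+1)$ is used.
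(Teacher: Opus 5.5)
Your proposal is correct and follows the paper's proof essentially step for step. Like the paper, you reduce to Corollary~\ref{cor:uniform-ngeqd}, use that the Reed--Solomon backbone is MDS, and show that each $V_i$ (resp.\ $W_i$) is $n$-MDS (resp.\ $m$-MDS) because the square Vandermonde minor on the top $\min(n,r)$ (resp.\ $\min(m,r)$) rows of any chosen column set has nonzero determinant. Your two additions, the root-counting proof that the code is MDS and the observation that only distinctness of the nodes matters (with existence ensured by $q\ge r+1$), are correct and harmless.
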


\begin{proof}
We verify the hypotheses of Corollary~\ref{cor:uniform-ngeqd}. Fix \(i\), and
let \(S\subseteq[r]\) have size \(p:=\min(n,r)\). The first \(p\) rows of
\((V_i)_S\) form the square Vandermonde matrix
\[
[u_{i,s}^{\rho-1}]_{\rho\in[p],s\in S},
\]
whose determinant is
\[
\prod_{\substack{s<s'\\ s,s'\in S}}(u_{i,s'}-u_{i,s})\ne0.
\]
Thus, every \(p\) columns of \(V_i\) are linearly independent, so \(V_i\) is
\(n\)-MDS. The same argument shows that \(W_i\) is \(m\)-MDS. Since \(C\) is a
Reed--Solomon \([R,k,d]_q\) code, it is MDS, and the assumption \(q\ge R\)
ensures that \(R\) distinct evaluation points are available. The remaining
hypotheses \(d\le n\le m\) and \(n+m\ge r+d\) are part of
Definition~\ref{def:vrs}. Therefore \((C,\br,\bV,\bW)\) is an extremal tuple,
and \(\cC=\phi^\br_{\bV,\bW}(C)\) is BTR.
\end{proof}

\begin{example}
\label{ex:vrs}
Take \(q=7\), \(t=2\), \(n=m=3\), \(k=2\), \(d=3\), \(r=2\), and \(R=4\). Let
\(C\) be the Reed--Solomon \([4,2,3]_7\) code generated by
\[
G_C=
\begin{pmatrix}
1&1&1&1\\
1&2&3&4
\end{pmatrix}.
\]
Choose \(V_1=V_2\) with nodes \((1,2)\), and \(W_1=W_2\) with nodes
\((1,3)\):
\[
V_i=
\begin{pmatrix}
1&1\\
1&2\\
1&4
\end{pmatrix},
\qquad
W_i=
\begin{pmatrix}
1&1\\
1&3\\
1&2
\end{pmatrix}.
\]
Each is \(3\)-MDS. Hence the resulting code
\(\cC\subseteq\F_7^{3\times3}\oplus\F_7^{3\times3}\) is BTR with parameters
\([(3,3)\times(3,3),2,3]_7\) and \(\btrk(\cC)=4\).
\end{example}

We finally relate this construction to rank-metric and linearized
Reed--Solomon constructions. When \(t=1\), Theorem~\ref{thm:vrs-btr} produces a
rank-metric code \(\cC\subseteq\F_q^{n\times m}\) generated by Vandermonde
lifts of a Reed--Solomon code, with parameters \([n\times m,k,d]_q\) and
\(\btrk(\cC)=k+d-1\). This is a base-field analogue of the MTR phenomenon
studied in~\cite[Theorem~5.7]{BNRS19}; Definition~\ref{def:vrs} works over
\(\F_q\) throughout, using classical rather than skew Vandermonde structure.

We also recall how this viewpoint compares with linearized Reed--Solomon
codes. Fix an \(\F_q\)-basis
\(\boldsymbol\beta=(\beta_1,\ldots,\beta_m)\) of \(\F_{q^m}\), and let
\[
\mathrm{ext}_{\boldsymbol\beta}:\F_{q^m}^n\longrightarrow \F_q^{m\times n}
\]
be the componentwise expansion map sending each coordinate
\(x\in\F_{q^m}\) to its coordinate column with respect to
\(\boldsymbol\beta\). More generally, for sum-rank block structures, this
expansion is applied blockwise.

Let \(C_{\rm LRS}\subseteq\F_{q^m}^n\) be an \(\F_{q^m}\)-linear linearized
Reed--Solomon code of dimension \(K\), obtained from skew-polynomial
evaluations at \(\ell=t\) distinct \(\sigma\)-conjugacy classes in the standard
notation of~\cite{MartinezPenas2018}. Its componentwise expansion
\[
\mathcal C_{\rm LRS}:=\mathrm{ext}_{\boldsymbol\beta}(C_{\rm LRS})
\]
is an \(\F_q\)-linear sum-rank code of dimension \(Km\) and minimum sum-rank
distance \(D=n-K+1\). If \(\mathcal C_{\rm LRS}\) were BTR, then
Corollary~\ref{cor:btr-implies-mds} would imply the existence of an MDS block code
\[
[Km+n-K,\;Km,\;n-K+1]_q
\]
over \(\F_q\). Thus, over small base fields, the obstruction
\[
\btrk(\mathcal C_{\rm LRS})\ge N_q(Km,n-K+1)
\]
may be strictly stronger than the Singleton value \(Km+n-K\). This shows that
MSRD optimality and BTR optimality are distinct notions.

\subsection{Griesmer-optimal backbones and extremal liftings}
\label{subsec:griesmer-backbones}

The Vandermonde-lifted Reed--Solomon construction is BTR because it starts
from an MDS backbone of length \(k+d-1\). Over small base fields such a
backbone may not exist, and the Griesmer bound can give the stronger
obstruction \(N_q(k,d)\ge\Gri(k,d)>k+d-1\). The next result shows that, when a
Griesmer-optimal Hamming code is available, the same block-diagonal lifting
attains the sharp value \(\btrk=N_q(k,d)\).

\begin{proposition}[Griesmer-optimal backbone liftings]
\label{prop:griesmer-backbone-lifting}
Let \(C\subseteq\F_q^R\) be a linear \([R,k,d]_q\) block code with
\(R=\Gri(k,d)\), and let \(\br=(r_1,\dots,r_t)\) be a composition of \(R\) with
\(r_i\le\min(n_i,m_i)\) for every \(i\). For each \(i\), let
\(V_i\in\F_q^{n_i\times r_i}\) and \(W_i\in\F_q^{m_i\times r_i}\) have full
column rank. Then \(\cC:=\phi^\br_{\bV,\bW}(C)\subseteq\cM(\bn,\bm)\) satisfies
\[
\dim_{\F_q}\cC=k,\qquad
\dSR(\cC)=d,\qquad
\btrk(\cC)=\Gri(k,d)=N_q(k,d).
\]
Hence \(\cC\) is block-tensor-rank-extremal. It is BTR precisely when
\(\Gri(k,d)=k+d-1\).
\end{proposition}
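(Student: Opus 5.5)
Since each \(r_i\le\min(n_i,m_i)\), the lifting matrices \(V_i\) and \(W_i\) have full column rank. So, exactly as in the proof of Proposition~\ref{prop:fat-block}, multiplying \(\diag(c^{(i)})\) on the left by \(V_i\) and on the right by \(W_i^\top\) preserves its rank. For every \(c\in\F_q^R\) this gives
\[
\rk\bigl(V_i\diag(c^{(i)})W_i^\top\bigr)=\rk\bigl(\diag(c^{(i)})\bigr)=\wt_H(c^{(i)}),
\]
and summing over the blocks,
\[
\wt_{\SR}\bigl(\phi^\br_{\bV,\bW}(c)\bigr)=\wt_H(c).
\]
In other words, the lifting is a linear isometry from \((\F_q^R,\wt_H)\) into \((\cM(\bn,\bm),\wt_{\SR})\). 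I will derive everything from this one identity.

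\textbf{Dimension, distance and upper bound.} The isometry is injective, since \(\phi^\br_{\bV,\bW}(c)=0\) forces \(\wt_H(c)=0\). Hence \(\phi^\br_{\bV,\bW}|_C\) is injective and \(\dim\cC=k\). The isometry also shows that the nonzero weights of \(\cC\) are exactly the nonzero Hamming weights of \(C\), so \(\dSR(\cC)=d_H(C)=d\). In particular, condition \eqref{eq:distance-lift} holds. Proposition~\ref{prop:phi-construction} then gives \(\btrk(\cC)\le R=\Gri(k,d)\).

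\textbf{Lower bound and the value of \(N_q(k,d)\).} Theorem~\ref{thm:main-bound} gives \(\btrk(\cC)\ge N_q(k,d)\), and the Griesmer bound gives \(N_q(k,d)\ge\Gri(k,d)\). On the other hand, \(C\) is itself a linear \([\Gri(k,d),k,d]_q\) code, so \(N_q(k,d)\le\Gri(k,d)\). Combining these,
\[
\Gri(k,d)\ge\btrk(\cC)\ge N_q(k,d)=\Gri(k,d),
\]
which yields all three equalities in the statement. Block-tensor-rank-extremality is then just Definition~\ref{def:btr}. For the BTR clause: \(\cC\) is BTR iff \(\btrk(\cC)=k+d-1\), and since \(\btrk(\cC)=\Gri(k,d)\), this holds iff \(\Gri(k,d)=k+d-1\).

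\textbf{Expected obstacle.} No step is really hard. The one point that needs care is the rank-preservation claim when some \(r_i=0\), or when \(r_i<\min(n_i,m_i)\). If \(r_i=0\), the block is zero by convention and contributes \(0=\wt_H(c^{(i)})\). If \(r_i>0\), full column rank of \(V_i\) means \(V_i\) is injective on \(\F_q^{r_i}\), so left multiplication by \(V_i\) preserves rank. Full column rank of \(W_i\) means \(W_i^\top\) is surjective onto \(\F_q^{r_i}\), so right multiplication by \(W_i^\top\) preserves row space, hence rank. This holds for any \(r_i\le\min(n_i,m_i)\), which justifies the isometry identity in every case.
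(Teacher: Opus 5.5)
Your proposal is correct and follows essentially the same route as the paper: full column rank gives the isometry $\wt_{\SR}(\phi^\br_{\bV,\bW}(c))=\wt_H(c)$ (hence injectivity, $\dim\cC=k$ and $\dSR(\cC)=d$), the $R$ block-simple tensors give $\btrk(\cC)\le R$, the coordinate-code/Griesmer bound gives $\btrk(\cC)\ge \Gri(k,d)$, and the existence of $C$ itself gives $N_q(k,d)\le R$. One small wording slip: surjectivity of $W_i^\top$ shows that right multiplication by $W_i^\top$ preserves the column space, not the row space; the rank is preserved either way, so the argument stands.
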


\begin{proof}
For \(c\in C\), let \(S_i\subseteq[r_i]\) be the support of \(c^{(i)}\), with
\(|S_i|=w_i\). Since \(r_i\le\min(n_i,m_i)\) and \(V_i,W_i\) have full column
rank, the submatrices indexed by \(S_i\) have full column rank. Therefore
\[
\rk(V_i\diag(c^{(i)})W_i^\top)=w_i=\wt_H(c^{(i)}).
\]
Summing over \(i\),
\[
\wt_{\SR}(\phi^\br_{\bV,\bW}(c))
=
\sum_{i=1}^{t}\wt_H(c^{(i)})
=
\wt_H(c).
\]
Thus, the lifting is injective on \(C\), and it preserves the minimum distance:
\(\dSR(\cC)=d\). The lifted code lies in the span of \(R\) block-simple tensors,
so \(\btrk(\cC)\le R\). Corollary~\ref{cor:griesmer-btrk-bound} gives
\(\btrk(\cC)\ge\Gri(k,d)=R\), hence \(\btrk(\cC)=R\). Finally, the Griesmer
bound gives \(N_q(k,d)\ge\Gri(k,d)=R\), while the existence of the length-\(R\)
code \(C\) gives \(N_q(k,d)\le R\). Therefore \(N_q(k,d)=R\), and the result
follows.
\end{proof}

\begin{corollary}[Simplex liftings]
\label{cor:simplex-lifting}
Let \(\mu\ge2\), and let \(C\subseteq\F_q^R\) be the \(q\)-ary simplex code
with parameters
\[
\left[\frac{q^\mu-1}{q-1},\ \mu,\ q^{\mu-1}\right]_q.
\]
Assume \(R=\sum_ir_i\) with \(r_i\le\min(n_i,m_i)\) for all \(i\), and choose
full-column-rank matrices \(V_i\in\F_q^{n_i\times r_i}\) and
\(W_i\in\F_q^{m_i\times r_i}\). Then the lifted code is
block-tensor-rank-extremal with
\[
\dim\cC=\mu,\qquad
\dSR(\cC)=q^{\mu-1},\qquad
\btrk(\cC)=\frac{q^\mu-1}{q-1}.
\]
The gap from the BTR threshold is
\[
\frac{q^\mu-1}{q-1}-(\mu+q^{\mu-1}-1)
=
\frac{q^{\mu-1}+q-2-(q-1)\mu}{q-1}.
\]
\end{corollary}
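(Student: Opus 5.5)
This is a direct application of Proposition~\ref{prop:griesmer-backbone-lifting}. All that has to be checked is that the simplex code is Griesmer-optimal, i.e.\ that its length \(R=\frac{q^\mu-1}{q-1}\) equals \(\Gri(\mu,q^{\mu-1})\). The gap formula then follows by algebra.

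\textbf{Step 1: the simplex code meets the Griesmer bound.} We compute
\[
\Gri(\mu,q^{\mu-1})=\sum_{j=0}^{\mu-1}\left\lceil\frac{q^{\mu-1}}{q^j}\right\rceil=\sum_{j=0}^{\mu-1}q^{\mu-1-j}=\sum_{s=0}^{\mu-1}q^{s}=\frac{q^\mu-1}{q-1}.
\]
No ceilings actually intervene, because \(j\le\mu-1\) makes every quotient an integer. We take as known the classical parameters of the simplex code: length \(\frac{q^\mu-1}{q-1}\), dimension \(\mu\), and minimum distance \(q^{\mu-1}\). In fact every nonzero codeword has weight exactly \(q^{\mu-1}\). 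So \(C\) is a linear \([R,k,d]_q\) code with \(k=\mu\), \(d=q^{\mu-1}\) and \(R=\Gri(k,d)\).

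\textbf{Step 2: apply the general proposition.} The hypotheses of Proposition~\ref{prop:griesmer-backbone-lifting} are exactly the assumptions given here. The composition \(\br\) satisfies \(r_i\le\min(n_i,m_i)\), and \(V_i,W_i\) have full column rank. The proposition therefore gives \(\dim\cC=\mu\), \(\dSR(\cC)=q^{\mu-1}\), and \(\btrk(\cC)=\Gri(\mu,q^{\mu-1})=N_q(\mu,q^{\mu-1})=\frac{q^\mu-1}{q-1}\). In particular \(\cC\) is block-tensor-rank-extremal.

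\textbf{Step 3: the gap.} We subtract \(k+d-1=\mu+q^{\mu-1}-1\) and put everything over the common denominator \(q-1\):
\[
\frac{q^\mu-1-(q-1)(\mu+q^{\mu-1}-1)}{q-1}=\frac{q^\mu-1-q^\mu+q^{\mu-1}-(q-1)\mu+q-1}{q-1}=\frac{q^{\mu-1}+q-2-(q-1)\mu}{q-1}.
\]
This is the stated expression. None of the steps is a real obstacle. The only points that need care are noting that the ceilings in \(\Gri\) are exact, and citing the standard simplex parameters.
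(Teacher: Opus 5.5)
Your proof is correct and follows the paper's intended route. The corollary is a direct specialization of Proposition~\ref{prop:griesmer-backbone-lifting}: you check that $\Gri(\mu,q^{\mu-1})=\frac{q^\mu-1}{q-1}$, where every ceiling is exact, and then carry out the gap algebra, and both steps are done correctly.
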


\begin{example}
\label{ex:simplex-lifting}
Take \(q=2\) and \(\mu=4\). The binary simplex code has parameters
\([15,4,8]_2\), and \(\Gri(4,8)=8+4+2+1=15\). Choose \(t=5\), \(r_i=3\) for
every \(i\), \(\bn=\bm=(3,3,3,3,3)\), and \(V_i=W_i=I_3\). The lifted code has
\[
\dim\cC=4,\qquad \dSR(\cC)=8,\qquad \btrk(\cC)=15.
\]
The BTR threshold is \(k+d-1=11\), and since \(N_2(4,8)\ge15\) by the
Griesmer bound, no linear \([11,4,8]_2\) block code exists. Thus, the lifted code
is \(N_q(k,d)\)-extremal but not BTR.
\end{example}

\begin{corollary}[MacDonald and Solomon--Stiffler liftings]
\label{cor:ss-macdonald-liftings}
Let \(C\subseteq\F_q^R\) be any Griesmer-optimal Solomon--Stiffler code
\cite{SolomonStiffler1965}, with parameters \([R,k,d]_q\) and
\(R=\Gri(k,d)\). Under the hypotheses of
Proposition~\ref{prop:griesmer-backbone-lifting}, the lifted sum-rank code is
block-tensor-rank-extremal. In particular, for the MacDonald subfamily with
parameters
\[
\left[\frac{q^\mu-q^s}{q-1},\ \mu,\ q^{\mu-1}-q^{s-1}\right]_q,\qquad
1\le s<\mu,
\]
the lifted code satisfies
\[
\dim\cC=\mu,\qquad
\dSR(\cC)=q^{\mu-1}-q^{s-1},\qquad
\btrk(\cC)=\frac{q^\mu-q^s}{q-1}.
\]
The gap from the BTR threshold is
\[
\frac{q^{\mu-1}-q^{s-1}}{q-1}-(\mu-1).
\]
\end{corollary}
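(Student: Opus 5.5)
The statement is a direct specialization of Proposition~\ref{prop:griesmer-backbone-lifting}, so the plan is to check its one nontrivial hypothesis, $R=\Gri(k,d)$, for the MacDonald parameters, and then compute the gap. For a general Griesmer-optimal Solomon--Stiffler code $C$ with $R=\Gri(k,d)$, together with lifting data $(\br,\bV,\bW)$ satisfying $r_i\le\min(n_i,m_i)$ and full column rank, Proposition~\ref{prop:griesmer-backbone-lifting} applies verbatim. It gives $\dim\cC=k$, $\dSR(\cC)=d$ and $\btrk(\cC)=\Gri(k,d)=N_q(k,d)$, so the lifted code is block-tensor-rank-extremal.

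\textbf{Griesmer equality for MacDonald codes.} With $k=\mu$ and $d=q^{\mu-1}-q^{s-1}$, we need to show
\[
\Gri(\mu,d)=\frac{q^\mu-q^s}{q-1}.
\]
I would split the sum $\sum_{j=0}^{\mu-1}\lceil d/q^j\rceil$ at $j=s-1$.

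For $0\le j\le s-1$, the quotient $d/q^j=q^{\mu-1-j}-q^{s-1-j}$ is an integer. These terms contribute
\[
\sum_{j=0}^{s-1}\bigl(q^{\mu-1-j}-q^{s-1-j}\bigr).
\]

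For $s\le j\le \mu-1$, we have $d/q^j=q^{\mu-1-j}-q^{s-1-j}$ with $0<q^{s-1-j}<1$, so the ceiling is $q^{\mu-1-j}$. These terms contribute $\sum_{j=s}^{\mu-1}q^{\mu-1-j}$.

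Adding the two parts gives
\[
\sum_{e=0}^{\mu-1}q^e-\sum_{e=0}^{s-1}q^e=\frac{q^\mu-1-(q^s-1)}{q-1}=\frac{q^\mu-q^s}{q-1}.
\]
This equals the length of the MacDonald code, so the hypothesis holds. I would cite \cite{SolomonStiffler1965} only for the existence of the MacDonald code with these parameters, obtained by deleting a $(s-1)$-dimensional projective subspace from the simplex code. The only care needed here is the ceiling bookkeeping.

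\textbf{Gap from the BTR threshold.} The gap is $\frac{q^\mu-q^s}{q-1}-(\mu+q^{\mu-1}-q^{s-1}-1)$. Writing
\[
q^\mu-q^s=(q-1)\bigl(q^{\mu-1}-q^{s-1}\bigr)+\bigl(q^{\mu-1}-q^{s-1}\bigr),
\]
the first term over $q-1$ cancels $q^{\mu-1}-q^{s-1}$. What remains is $\frac{q^{\mu-1}-q^{s-1}}{q-1}-(\mu-1)$, which is the claimed value.
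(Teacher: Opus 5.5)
Your proposal is correct and follows the same route as the paper, which gives no separate proof and treats this corollary as a direct specialization of Proposition~\ref{prop:griesmer-backbone-lifting}. Splitting the Griesmer sum at $j=s-1$ to get $\Gri(\mu,q^{\mu-1}-q^{s-1})=\frac{q^\mu-q^s}{q-1}$, and rewriting $q^\mu-q^s=(q-1)(q^{\mu-1}-q^{s-1})+(q^{\mu-1}-q^{s-1})$ for the gap, correctly supply the checks the paper leaves implicit.
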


The MDS and Griesmer constructions realize two different levels of extremality:
\[
\renewcommand{\arraystretch}{1.25}
\begin{array}{|c|c|c|c|}
\hline
\text{Backbone} & \text{Length} & \text{\(\btrk\) after lifting} & \text{Extremality}\\
\hline
\text{MDS / RS} & k+d-1 & k+d-1 & \text{BTR}\\
\hline
\text{Griesmer-optimal} & \Gri(k,d)=N_q(k,d) & N_q(k,d) & \text{\(N_q\)-extremal}\\
\hline
\end{array}
\]
When \(\Gri(k,d)>k+d-1\), the Griesmer bound rules out any linear
\([k+d-1,k,d]_q\) coordinate code. In that case BTR codes cannot exist for
those \((q,k,d)\), but Proposition~\ref{prop:griesmer-backbone-lifting}
still gives codes attaining the sharp lower bound \(\btrk=N_q(k,d)\).

\section{BTR versus MSRD, existing constructions, and complexity}
\label{sec:msrd-comparison}

BTR optimality and MSRD optimality measure different features of a sum-rank code: the former concerns the minimum number of block-simple tensors needed to
span the code, whereas the latter concerns whether the code dimension attains the sum-rank Singleton-type ceiling. We first show that, within our uniform
MDS-backbone lifting regime, BTR optimality cannot coexist with the uniform MSRD dimension. We then record a projection-gap criterion explaining how large
projected tensor ranks force a code away from the BTR threshold, and we apply this criterion to two recent sum-rank constructions in the literature. We close with a brief look at the encoding complexity of BTR codes obtained from extremal tuples.

\subsection{No simultaneous MSRD dimension in the uniform MDS-backbone construction}

\begin{proposition}
\label{prop:no-msrd-uniform-backbone}
Assume $t\ge 2$, $n\ge 2$, and $1\le d\le n\le m$. Under the parameter constraints of Corollary~\ref{cor:uniform-ngeqd} (uniform MDS-backbone regime with $tr=R$ and $n+m\ge r+d$), the construction cannot have the MSRD dimension $k=m(tn-d+1)$.
\end{proposition}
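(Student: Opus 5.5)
The argument is a counting contradiction. In the construction of Corollary~\ref{cor:uniform-ngeqd}, the backbone is an MDS \([R,k,d]_q\) code with \(R=k+d-1=tr\). So the dimension is tied to the total number of backbone coordinates, and hence to \(r\). We will show that the constraint \(n+m\ge r+d\), together with \(d\le n\le m\), makes \(R\) far too small to accommodate \(k=m(tn-d+1)\).

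First, suppose for contradiction that \(k=m(tn-d+1)\). Then
\[
tr=R=k+d-1=m(tn-d+1)+d-1 .
\]
Next, use \(r\le n+m-d\), which follows from \(n+m\ge r+d\), to get
\[
tr\le t(n+m-d).
\]
Combining the two gives
\[
m(tn-d+1)+d-1\le t(n+m-d).
\]
Rearranged, this reads
\[
mtn-md+m+d-1\le tn+tm-td,
\]
that is,
\[
m(tn-d+1-t)+d-1 - tn + td\le 0 .
\]
Write \(tn-d+1-t=t(n-1)-(d-1)\). Since \(d\le n\) and \(t\ge 2\), we have \(t(n-1)-(d-1)\ge (t-1)(n-1)\ge n-1\ge 1\).

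The main step, and the only place where care is needed, is to show that the left-hand side is strictly positive, so that the inequality fails. Using \(m\ge n\) and the positivity of the coefficient of \(m\), the left-hand side is at least
\[
n\bigl(t(n-1)-(d-1)\bigr)+(d-1)-t(n-d).
\]
Set \(e:=d-1\in[0,n-1]\). This expression is affine in \(e\):
\[
e(1-n+t)+t(n^2-2n+1)...
\]
Rather than expand it in full, I would check that it is affine in \(e\) and verify positivity at the two endpoints. At \(e=0\) it equals \(tn(n-1)-t(n-1)=t(n-1)^2>0\), using \(n\ge 2\). At \(e=n-1\) it equals \(n(t-1)(n-1)+(n-1)>0\). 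Positivity at both endpoints then gives positivity on the whole interval, which yields the contradiction.

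A further constraint is also available: \(\min\) block usage \(r\ge1\) and the fact that \(k\le R\). Even \(k\le R=tr\le t(n+m-d)\) versus \(k=m(tn-d+1)\ge m\cdot t(n-1)+\dots\) gives an alternative route. For example, when \(d=n\) one needs \(m(t-1)n+m\le tm\), which fails for \(n\ge2\). I would present whichever of these two computations is cleaner. The hypotheses \(t\ge 2\) and \(n\ge 2\) are exactly what make the key coefficient \((t-1)(n-1)\) positive.
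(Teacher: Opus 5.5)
Your proof is correct and follows the same route as the paper. Assuming $k=m(tn-d+1)$ and combining $tr=k+d-1$ with $r\le n+m-d$, you reach the paper's inequality $m\bigl(t(n-1)-d+1\bigr)\le t(n-d)-d+1$, with the same positive coefficient $A=t(n-1)-d+1$.

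Only the closing step differs. The paper uses $A-B=t(d-1)\ge 0$, so $mA\le B\le A$ forces $m\le 1$, and this needs only $m\ge 2$. You instead substitute $m\ge n$ and check that the affine expression $t(n-1)^2+(d-1)(t+1-n)$ is positive at $d=1$ and at $d=n$, which is equally valid. Your final ``alternative route'' paragraph and the trailing ``\dots'' in the display are not needed and can be dropped.
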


\begin{proof}
Suppose for contradiction that $k=m(tn-d+1)$. Then $R=k+d-1=m(tn-d+1)+d-1$ and $r=R/t$. The condition $n+m\ge r+d$ becomes
\[
n+m\ge\frac{m(tn-d+1)+d-1}{t}+d.
\]
Multiplying by $t$ and rearranging,
\begin{align*}
tn+tm
&\ge mtn-md+m+d-1+td\\
0
&\ge m(t(n-1)-d+1)+(d-1)-t(n-d).
\end{align*}
Equivalently,
\[
m\bigl(t(n-1)-d+1\bigr)\le t(n-d)-d+1.
\]
Set $A:=t(n-1)-d+1$ and $B:=t(n-d)-d+1$, so the inequality is $mA\le B$.

\emph{Claim 1: $A>0$.} We have $A=t(n-1)-(d-1)$. Since $t\ge 2$ and $n\ge 2$, $t(n-1)\ge 2$, while $d\le n$ gives $d-1\le n-1$, so
\[
A\ge t(n-1)-(n-1)=(t-1)(n-1)\ge 1\cdot 1=1>0.
\]

\emph{Claim 2: $A\ge B$, with equality iff $d=1$.} Indeed, $A-B=t(d-1)\ge 0$.

\emph{Final contradiction.} If $d=1$, then $A=B$, so $mA\le B=A$ forces $m\le 1$, contradicting $m\ge 2$. If $d\ge 2$, then $B<A$: either $B\ge 0$, giving $m\le B/A<1$, contradicting $m\ge 2$; or $B<0$, giving $mA\le B<0$, but $mA>0$. In all cases we have a contradiction.
\end{proof}

\begin{remark}
\label{rem:trade-off}
Proposition~\ref{prop:no-msrd-uniform-backbone} is not a general impossibility result for simultaneous MSRD and BTR optimality. Rather, it shows that the
specific uniform MDS-backbone lifting regime considered in Corollary~\ref{cor:uniform-ngeqd} lies in a dimension range incompatible with the uniform MSRD Singleton dimension. Thus, within this construction, BTR optimality is obtained at the cost of not reaching the MSRD dimension.
\end{remark}

\begin{proposition}[Projection-gap criterion]
\label{prop:projection-gap-criterion}
Let \(\mathcal C\) be a nonzero \([\mathbf n\times\mathbf m,k,d]_q\)
sum-rank code. For each nonzero projection
\(\mathcal C_i:=\pi_i(\mathcal C)\), write $ k_i:=\dim_{\mathbb F_q}\mathcal C_i, \, \delta_i:=d_{\rm rk}(\mathcal C_i).$ Then
\[
    \btrk(\mathcal C)-(k+d-1)
    \ge
    \sum_{i:\mathcal C_i\ne0} N_q(k_i,\delta_i)
    -(k+d-1).
\]
In particular, if $\sum_{i:\mathcal C_i\ne0} N_q(k_i,\delta_i)>k+d-1,$ then \(\mathcal C\) is not BTR, and the difference above gives a lower bound
on its gap from the BTR threshold. The weaker but explicit Singleton-level
version is
\[
    \btrk(\mathcal C)-(k+d-1)
    \ge
    \sum_{i:\mathcal C_i\ne0}(k_i+\delta_i-1)
    -(k+d-1).
\]
\end{proposition}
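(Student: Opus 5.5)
The criterion follows directly from the projection-wise lower bound. We subtract the common quantity \(k+d-1\) from both sides of the inequality in Corollary~\ref{cor:projection-wise-bound}, so no new combinatorial input is needed.

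First, we apply Corollary~\ref{cor:projection-wise-bound} with \(d_i=\delta_i=d_{\rm rk}(\cC_i)\). This gives
\[
\btrk(\cC)\ge\sum_{i:\cC_i\ne0}N_q(k_i,\delta_i)\ge\sum_{i:\cC_i\ne0}(k_i+\delta_i-1).
\]
The first inequality is itself a consequence of the blockwise decomposition \(\btrk(\cC)=\sum_i\trk(\cC_i)\) in Proposition~\ref{prop:blockwise-decomposition}. It is combined with the rank-metric coordinate-code bound \(\trk(\cC_i)\ge N_q(k_i,\delta_i)\) of \cite[Corollary~4.15]{BNRS19}, applied to each nonzero projection; zero projections contribute nothing. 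The second inequality is the classical Singleton bound \(N_q(k_i,\delta_i)\ge k_i+\delta_i-1\), applied termwise.

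Subtracting \(k+d-1\) from each side yields both displayed inequalities of the proposition. For the ``in particular'' clause, suppose \(\sum_{i}N_q(k_i,\delta_i)>k+d-1\). Then the right-hand side is positive, so \(\btrk(\cC)>k+d-1\) and \(\cC\) is not BTR in the sense of Definition~\ref{def:btr}. The positive difference is then a lower bound on \(\btrk(\cC)-(k+d-1)\), which is the gap from the BTR threshold.

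There is no real obstacle here. The only point to check is that each nonzero projection \(\cC_i\) is a genuine nonzero rank-metric code, so that \(k_i\ge1\) and \(\delta_i\) is defined and finite. Only then is \(N_q(k_i,\delta_i)\) meaningful and the cited rank-metric bound applicable. This is exactly why the sums range only over \(i\) with \(\cC_i\ne0\).
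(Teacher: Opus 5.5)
Your proposal is correct and follows the paper's own proof. Both combine the blockwise decomposition $\btrk(\cC)=\sum_i\trk(\cC_i)$ with the rank-metric coordinate-code bound $\trk(\cC_i)\ge N_q(k_i,\delta_i)$ on each nonzero projection, subtract $k+d-1$, and apply the classical Singleton bound termwise for the weaker version; citing Corollary~\ref{cor:projection-wise-bound} just packages those same two ingredients.
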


\begin{proof}
By the blockwise decomposition theorem and the rank-metric coordinate-code
bound applied to each nonzero projection,
\[
    \btrk(\mathcal C)
    =
    \sum_{i=1}^t \trk(\mathcal C_i)
    \ge
    \sum_{i:\mathcal C_i\ne0} N_q(k_i,\delta_i).
\]
Subtracting \(k+d-1\) gives the first inequality. The last inequality follows
from the Singleton bound \(N_q(k_i,\delta_i)\ge k_i+\delta_i-1\) for each
nonzero projection.
\end{proof}

\subsection{Application to the sum-rank codes from~\cite{Chen2023Good}}

We now apply the blockwise decomposition to the construction \(SR(C_0,\ldots,C_v)\) of Chen~\cite{Chen2023Good}. Let \(C_s\subseteq\F_{q^n}^t\), \(s=0,\ldots,v\), be \(\F_{q^n}\)-linear Hamming-metric codes. After fixing an \(\F_q\)-basis of \(\F_{q^n}\), each \(q\)-polynomial \(f(x)=a_0x+a_1x^q+\cdots+a_vx^{q^v}\) is identified with the matrix of the induced \(\F_q\)-linear map \(\F_{q^n}\to\F_{q^n}\); here \(x\) denotes the input variable in \(\F_{q^n}\). For coefficient vectors \((a_{s,1},\ldots,a_{s,t})\in C_s\), \(s=0,\ldots,v\), Chen's construction gives a sum-rank code
\[
SR(C_0,\ldots,C_v)\subseteq \cM((n,\ldots,n),(n,\ldots,n))=\bigoplus_{j=1}^t\F_q^{n\times n},
\]
whose \(j\)-th block is the matrix of \(a_{0,j}x+a_{1,j}x^q+\cdots+a_{v,j}x^{q^v}\). For each \(j\in[t]\), set
\[
S_j:=\{s\in\{0,\ldots,v\}:\pi_j(C_s)\ne 0\},
\]
where \(\pi_j:\F_{q^n}^t\to\F_{q^n}\) is the \(j\)-th coordinate projection. Then the \(j\)-th block projection is
\[
D_j:=\left\{\sum_{s\in S_j}a_sx^{q^s}:a_s\in\pi_j(C_s)\right\}\subseteq\F_q^{n\times n}.
\]
By Proposition~\ref{prop:blockwise-decomposition},
\[
\btrk(SR(C_0,\ldots,C_v))=\sum_{j=1}^t\trk(D_j).
\]

\textbf{Full-projection case.}
Assume that the coefficient codes have full support, so that \(\pi_j(C_s)=\F_{q^n}\) for every \(j\in[t]\) and \(s=0,\ldots,v\). Then \(S_j=\{0,\ldots,v\}\) for all \(j\), and all projected spaces \(D_j\) are equal to
\[
D=\left\{a_0x+a_1x^q+\cdots+a_vx^{q^v}:a_0,\ldots,a_v\in\F_{q^n}\right\}\subseteq\F_q^{n\times n}.
\]
If \(v<n\), then \(x,x^q,\ldots,x^{q^v}\) are linearly independent over \(\F_{q^n}\), and hence \(\dim_{\F_q}D=n(v+1)\). Therefore \(\btrk(SR(C_0,\ldots,C_v))=t\cdot\trk(D)\). In particular, for \(n=2\) and \(v=1\), we have \(\dim_{\F_q}D=4=n^2\), so \(D=\F_q^{2\times2}\). Hence \(\trk(D)=4\), the matrix units give the upper bound, while the dimension of \(D\) gives the matching lower bound. Thus, \(\btrk(SR(C_0,C_1))=4t\) whenever both coefficient codes have full support.
\par
\textbf{Comparison with \(k+d-1\).} Let \(n=2\) and \(v=1\). Take \(C_0=\F_{q^2}^t\), and let \(C_1\subseteq\F_{q^2}^t\) be an MDS code of Hamming distance \(w_1=d_H(C_1)=2\). Then \(\dim_{\F_{q^2}}C_1=t-1\), while \(C_0\) has Hamming distance \(w_0=d_H(C_0)=1\). Therefore
\[
k=\dim_{\F_q}SR(C_0,C_1)=2t+2(t-1)=4t-2.
\]
Chen's distance estimate gives \(d_{SR}\ge \min\{w_0n,w_1(n-1)\}=\min\{2,2\}=2\). This bound is attained by choosing a coefficient vector in \(C_0\) supported on one coordinate and setting all other coefficient layers equal to zero. Hence \(d_{SR}=2\). Consequently, \(k+d_{SR}-1=(4t-2)+2-1=4t-1\), whereas the full-projection computation above gives \(\btrk(SR(C_0,C_1))=4t\). Thus
\[
\btrk(SR(C_0,C_1))-(k+d_{SR}-1)=1,
\]
so this family is close to BTR but is not BTR. This example shows that the projection gap can be small: MSRD-related or sum-rank constructions need not be
far from the BTR threshold.


\subsection{Application to the MSRD codes from~\cite{LaoCheeChenVu2024MSRD}}

We next apply the blockwise decomposition to Construction~A of Lao et al.~\cite{LaoCheeChenVu2024MSRD}. We only recall the features of the construction needed for our block-tensor-rank analysis. Let \(\mathcal C\subseteq \mathcal M(\mathbf n,\mathbf m)\) be the resulting sum-rank code, generated by \(k\) generator codewords \(G_h=(G_{h,1},\ldots,G_{h,t})\), \(h\in[k]\), where \(G_{h,i}\in\F_q^{n_i\times m_i}\). Thus, the \(i\)-th block projection is
\[
\pi_i(\mathcal C)=\Span_{\F_q}\{G_{1,i},\ldots,G_{k,i}\}\subseteq\F_q^{n_i\times m_i}.
\]
Construction~A is built column by column from rank-metric building blocks. For \(1\le i\le J-1\), the matrices \(G_{h,i}\) are chosen from a rank-metric code in \(\F_q^{n_i\times m_i}\) of minimum rank distance \(n_i\). For \(i=J\), they are chosen from a nested rank-metric structure that guarantees a projected rank distance denoted by \(\Delta_J\). For \(J+1\le i\le t\), the projection is the full matrix space \(\F_q^{n_i\times m_i}\).


Write \(k_i:=\dim_{\F_q}\pi_i(\mathcal C)\). For \(1\le i\le J-1\), the projected rank distance is at least \(n_i\); for \(i=J\), it is at least
\(\Delta_J\); and for \(J+1\le i\le t\), the projection is the full matrix space \(\F_q^{n_i\times m_i}\). Applying Proposition~\ref{prop:projection-gap-criterion} to Construction~A gives
\[
\btrk(\mathcal C)\ge
\sum_{i=1}^{J-1}N_q(k_i,n_i)+N_q(k_J,\Delta_J)+\sum_{i=J+1}^{t} n_im_i.
\]
Indeed, for \(i>J\), the projection is the full matrix space, whose tensor rank is \(n_im_i\): the matrix units give the upper bound and the ambient dimension gives the matching lower bound.

\begin{example}
\label{ex:lao-numerical}
Consider Example~1 of~\cite{LaoCheeChenVu2024MSRD}, with ambient space
\[
\F_q^{6\times19}\oplus \F_q^{6\times18}\oplus \F_q^{6\times6}
\oplus \F_q^{2\times2}\oplus \F_q^{1\times2}.
\]
Here \(J=3\), \(k=18\), and the resulting MSRD code has minimum sum-rank distance \(d=17\). The projected rank-metric structure gives \((k_1,\delta_1)=(18,6)\), \((k_2,\delta_2)=(18,6)\), and \((k_3,\delta_3)=(18,4)\), while the fourth and fifth projections are full matrix spaces of dimensions \(4\) and \(2\), respectively. Hence
\[
\btrk(\mathcal C)\ge N_q(18,6)+N_q(18,6)+N_q(18,4)+4+2.
\]
Using the Singleton lower bound \(N_q(k,d)\ge k+d-1\), we obtain
\[
\btrk(\mathcal C)\ge 23+23+21+4+2=73.
\]
On the other hand, \(k+d-1=18+17-1=34\). Thus
\[
\btrk(\mathcal C)-(k+d-1)\ge 73-34=39.
\]
Thus this particular MSRD code is far from the BTR threshold. The large gap is explained by Proposition~\ref{prop:projection-gap-criterion}: the first three rank-metric projections already force large tensor-rank contributions, while the last two full projections contribute their full ambient dimensions. This shows that MSRD optimality alone does not control block tensor rank.
\end{example}

\subsection{Encoding complexity}
\label{sec:complexity}

Finally, we compare the cost of the coordinate-map encoder associated with an extremal tuple to a flat generator-matrix encoder. We count \(\F_q\)-multiplications and \(\F_q\)-additions in a dense implementation, without exploiting possible zero message-dependent coefficients or sparsity in the matrices \(V_i,W_i\). Storage refers to the size of the encoder representation, measured in \(\F_q\)-symbols.

\textbf{Generator-matrix encoding.} Identify \(\cM(\bn,\bm)\cong\F_q^{\sum_i n_im_i}\) via blockwise vectorisation. A \(k\times\sum_i n_im_i\) generator matrix \(G\) in systematic form \(G=[I_k\mid G']\) encodes \(a\in\F_q^k\) as \(aG=(a,aG')\). The non-systematic part has size \(k\times(\sum_i n_im_i-k)\), so computing \(aG'\) uses \(k(\sum_i n_im_i-k)\) multiplications and \((k-1)(\sum_i n_im_i-k)\) additions. The required storage is \(k(\sum_i n_im_i-k)\) symbols.

\textbf{Coordinate-map encoding.}
Suppose \(\cC=\phi^\br_{\bV,\bW}(C)\) for an MDS \([R,k,d]_q\) block code \(C\subseteq\F_q^R\) with systematic generator matrix \(U=[I_k\mid U']\in\F_q^{k\times R}\), where \(U'\in\F_q^{k\times(R-k)}\). Encoding factors as
\[
a\;\xrightarrow{\cdot U}\;c=aU\in\F_q^R\;\xrightarrow{\phi^\br_{\bV,\bW}}\;\phi^\br_{\bV,\bW}(c)\in\cM(\bn,\bm).
\]

\textbf{Stage~1.}
Computing \(c=aU\) requires only the non-systematic part \(aU'\), since the first \(k\) coordinates of \(c\) are \(a\). Hence Stage~1 uses \(k(R-k)\) multiplications and \((k-1)(R-k)\) additions.

\textbf{Stage~2.}
For each block \(i\), compute \(V_i\diag(c^{(i)})W_i^\top\). When \(r_i\ge1\), this can be done by first scaling the \(r_i\) columns of \(V_i\) by the entries of \(c^{(i)}\), costing \(n_ir_i\) multiplications, and then multiplying the resulting \(n_i\times r_i\) matrix by \(W_i^\top\), costing \(n_ir_im_i\) multiplications and \(n_i(r_i-1)m_i\) additions. Blocks with \(r_i=0\) contribute no arithmetic cost. Thus, Stage~2 uses
\[
\sum_{i=1}^t n_ir_i(m_i+1)
\]
multiplications and
\[
\sum_{i=1}^t n_i\max\{r_i-1,0\}m_i
\]
additions. The storage required for the lifting data \((\bV,\bW)\) is \(\sum_{i=1}^t r_i(n_i+m_i)\) symbols, in addition to the \(k(R-k)\) symbols needed to store \(U'\).

\begin{proposition}
\label{prop:complexity}
Let \(\cC\) be an extremal BTR \([\bn\times\bm,k,d]_q\) code arising from an MDS block code with \(R=k+d-1\). Then:
\begin{itemize}
\item The coordinate-map encoder uses less storage than a systematic-form generator matrix if and only if
\[
kR+\sum_{i=1}^t r_i(n_i+m_i)<k\sum_{i=1}^t n_im_i.
\]
\item The multiplication cost is \(k(R-k)+\sum_{i=1}^t n_ir_i(m_i+1)\), versus \(k(\sum_i n_im_i-k)\) for the generator-matrix encoder.
\item The addition cost is \((k-1)(R-k)+\sum_{i=1}^t n_i\max\{r_i-1,0\}m_i\).
\end{itemize}
\end{proposition}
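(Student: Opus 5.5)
The proof is a direct accounting of the two encoders described just before the statement. We take their per-stage costs as established and only have to total them and compare with the generator-matrix encoder.

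\emph{Storage.} The coordinate-map encoder stores the non-systematic part \(U'\in\F_q^{k\times(R-k)}\), which is \(k(R-k)\) symbols, and the lifting data \((\bV,\bW)\), which is \(\sum_i r_i(n_i+m_i)\) symbols. The systematic generator-matrix encoder stores \(k(\sum_i n_im_i-k)\) symbols. The coordinate-map encoder uses less storage exactly when
\[
k(R-k)+\sum_{i=1}^t r_i(n_i+m_i)<k\Bigl(\sum_{i=1}^t n_im_i-k\Bigr).
\]
Adding \(k^2\) to both sides gives the stated criterion \(kR+\sum_i r_i(n_i+m_i)<k\sum_i n_im_i\). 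Since the two inequalities are equivalent, this proves both directions of the ``if and only if''.

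\emph{Arithmetic cost.} We add the Stage~1 and Stage~2 counts. Stage~1 computes \(aU'\), using \(k(R-k)\) multiplications and \((k-1)(R-k)\) additions. Stage~2 treats each block with \(r_i\ge1\). Scaling the columns of \(V_i\) costs \(n_ir_i\) multiplications. The \(n_i\times r_i\) by \(r_i\times m_i\) product then costs \(n_ir_im_i\) multiplications and \(n_i(r_i-1)m_i\) additions, because each of the \(n_im_i\) entries is a sum of \(r_i\) terms. Blocks with \(r_i=0\) cost nothing, and the formulas \(n_ir_i(m_i+1)\) and \(n_i\max\{r_i-1,0\}m_i\) both vanish when \(r_i=0\), so summing over all \(i\) is legitimate. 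Totalling the stages gives the stated multiplication and addition costs. The comparison value \(k(\sum_i n_im_i-k)\) is the cost of computing \(aG'\) for \(G=[I_k\mid G']\).

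\emph{Remaining checks.} The extremal/BTR hypothesis with \(R=k+d-1\) is used in only two places. First, it guarantees that \(C\) is MDS. Second, it guarantees that \(C\) has a systematic generator matrix \(U=[I_k\mid U']\), since any \(k\) coordinates of an MDS code form an information set. The only point needing care is this convention: \(\cC\) must admit a systematic generator matrix \([I_k\mid G']\) in some coordinate ordering, which holds for any \(k\)-dimensional code up to permuting coordinates. We also rely on the dense-implementation convention stated earlier, under which zero coefficients are not exploited.
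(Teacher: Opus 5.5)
Your proof is correct and follows the same approach as the paper. You total the storage of \(U'\) and \((\bV,\bW)\), compare it with \(k(\sum_i n_im_i-k)\), and cancel the \(k^2\) terms to obtain the stated criterion; you then add the Stage~1 and Stage~2 arithmetic counts, and your extra remarks (the MDS backbone has a systematic generator matrix because any \(k\) coordinates form an information set, and blocks with \(r_i=0\) contribute nothing) are valid details the paper leaves implicit.
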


\begin{proof}
The systematic generator matrix uses \(k(\sum_i n_im_i-k)\) symbols, while the coordinate-map encoder uses \(k(R-k)+\sum_i r_i(n_i+m_i)\) symbols. Hence the coordinate-map encoder uses less storage exactly when
\[
k(R-k)+\sum_i r_i(n_i+m_i)<k\Bigl(\sum_i n_im_i-k\Bigr),
\]
which is equivalent to \(kR+\sum_i r_i(n_i+m_i)<k\sum_i n_im_i\). The arithmetic costs are obtained by adding the Stage~1 and Stage~2 counts above.
\end{proof}

\begin{remark}
\label{rem:complexity}
When \(t=1\), Stage~2 reduces to a single product \(V\diag(c)W^\top\), recovering the rank-metric coordinate map of~\cite[Section 4.2]{BNRS19}. For \(t\ge2\), whether the coordinate-map encoder beats the generator-matrix encoder depends on \((\br,\bn,\bm)\); the Stage~2 cost can grow faster than \(\sum_i n_im_i\) when some \(r_i\) is large relative to its block size.
\end{remark}

\section{Conclusion and future work}
\label{sec:conclusion}
We introduced and studied a block-tensor-rank invariant for sum-rank metric codes through block-simple tensors and the associated notion of block tensor rank.
The main structural result is a blockwise decomposition theorem, showing that the block tensor rank of a sum-rank code is the sum of the tensor ranks of its block projections. This identity connects the sum-rank setting directly to the tensor-rank theory of rank-metric codes.

We also established projection-wise bound and coordinate-code bound for block tensor rank. The coordinate-code bound is obtained through a coordinate-code argument and yields a Singleton-type inequality, with a Griesmer refinement. These bounds naturally lead to two optimality notions: BTR codes, which meet the Singleton-level lower bound, and block-tensor-rank-extremal codes, which meet the sharper Hamming-code lower bound. On the constructive side, we obtained BTR families by lifting MDS backbones, and block-tensor-rank-extremal families by lifting Griesmer-optimal backbones. We also compared BTR optimality with MSRD optimality and showed that these two notions capture different structural features of sum-rank codes.

Several questions remain open. One direction is to classify parameter regimes in which BTR codes exist, especially beyond the lifting regimes considered here. Another is to sharpen the projection-wise lower bounds by using finer tensor-rank information for the individual rank-metric projections. It would also be interesting to understand when MSRD codes can be close to BTR, and when their block tensor rank must be much larger than the Singleton-level threshold.
\par
\textbf{Possible cryptographic connections.}
Code-based cryptography is one of the main approaches to post-quantum cryptography, and rank-metric codes have been studied as compact algebraic alternatives to Hamming-metric codes~\cite{Loidreau2010RankMetricMcEliece}. More recently, sum-rank-metric codes have appeared in cryptographic contexts through generic decoding problems and structural questions in the sum-rank metric~\cite{PuchingerRennerRosenkilde2022,SantonastasoZullo2025}.

The block tensor rank developed in this paper suggests a possible structural viewpoint for the study of sum-rank code families in this context. It measures how economically a sum-rank code can be spanned by rank-one components supported inside individual blocks, while the blockwise decomposition relates this invariant to the tensor ranks of the rank-metric projections. Thus, small block tensor rank may indicate additional algebraic structure, whereas large projected tensor ranks may serve as evidence of more complicated internal rank-metric behavior.

This leads to several natural questions. Can block tensor rank be used as a distinguisher between random sum-rank codes and highly structured code families? Can unusually small block tensor rank reveal information relevant to structural decoding attacks? Conversely, can one design efficiently decodable sum-rank code families whose block tensor rank and projected tensor ranks are provably large? These questions suggest a possible direction for future work at the interface of sum-rank coding theory, tensor methods, and post-quantum cryptography.

\section*{Acknowledgements}
The work of Hoang Ta was funded by the Ministry of Education and Training of Vietnam under project code CT2025.EA.BKA.08.
The research of Huimin Lao is supported by the National Research Foundation, Singapore and Infocomm Media Development Authority under its Trust Tech Funding Initiative. 
Any opinions, findings and conclusions or recommendations expressed in this material are those of the author(s)
and do not reflect the views of National Research Foundation, Singapore and Infocomm Media Development Authority.


\bibliographystyle{unsrt}
\bibliography{references}

\end{document}